\newtheorem{theorem}{Theorem}{}
{}
\newtheorem{lemma}{Lemma}{}
\newtheorem{assumption}{Assumption}
\newtheorem{corollary}{Corollary}{}
\newtheorem{remark}{Remark}{}
\newtheorem{problem}{Problem~}
\newtheorem{definition}{Definition~}%[section]
\newenvironment{proof}{\hspace{0ex}\textsc{Proof}.\hspace{1ex}}{\hfill$\Box$\newline}
\begin{document}

\begin{frontmatter}

\title{%More Is Better?
	 \textcolor{black}{On the Effects and Optimal Design of Redundant Sensors in Collaborative State Estimation} \thanksref{mytitle}}
\thanks[mytitle]{The work by Y. Ren and Z. Duan  is supported by the National Natural Science Foundation of China under Grant T2121002 62173006, and the work by Y. Ren is also supported by the project Peking University-BHP Carbon and Climate Wei-Ming PhD Scholar.\emph{(Corresponding author: Zhisheng Duan.)}}

\author[mymainaddress]{Yunxiao~Ren}\ead{renyx@pku.edu.cn},    % Add the
\author[mymainaddress]{Zhisheng~Duan}\ead{duanzs@pku.edu.cn},               % e-mail address
\author[mysecondaryaddress]{Peihu Duan}\ead{duanpeihu.work@gmail.com},  % (ead) as shown
\author[mythirdaddress]{Ling Shi}\ead{eesling@ust.hk}  % (ead) as shown

\address[mymainaddress]{the State Key Laboratory for Turbulence and Complex Systems, Department of Mechanics and Engineering Science, College of Engineering, Peking University, Beijing, 100871, China}
\address[mysecondaryaddress]{School of Electrical Engineering and Computer Science, KTH Royal Institute of Technology, Stockholm, Sweden}
\address[mythirdaddress]{the Department of Electronic and Computer Engineering, the Hong Kong University of Science and Technology, Clear Water Bay, Kowloon, Hong Kong, China}

\begin{keyword}
 Redundant sensors, Optimal sensor design, Collaborative state estimation, Kalman filter.
\end{keyword}

\begin{abstract}
%Redundant sensors often exist in sensor networks. However, the significance of these sensors has not been deeply  understood. This paper studies the effects and design optimization problem of redundant sensors in sensor networks where sensors cooperate to perform Kalman filtering to estimate the state of a process. The main results of this paper are separated into two parts, namely the theoretical  analysis of the effects of the redundant sensors, and the engineering-oriented design optimization of the redundant sensors. For the theoretical part,  Riccati equation and Symplectic matrix theory are utilized to find the closed-form role of redundant sensors in cooperative state estimation. The results validate that a sensor network can achieve better estimation performance if more redundant sensors are added, which is the so-called ``more is better". Moreover, a sufficient and necessary condition is  derived  to show when the redundant sensors can improve the estimation performances of all channels of the state. For the engineering-oriented part, a novel algorithm is developed to solve the  redundant sensors' design optimization problem.  Particularly, linear  matrix inequation and iterative semi-definite programming techniques are used to cope with nonlinear constraints in the optimization problem.  Furthermore, the convergence of the proposed algorithm is also  guaranteed. Numerical simulations are provided to illustrate the correctness and effectiveness of the results.
The existence of redundant sensors in \textcolor{black}{collaborative state estimation} is a common occurrence, yet their true significance remains elusive.  This paper comprehensively  investigates the effects and optimal design of redundant sensors in sensor networks that use Kalman filtering to estimate the state of a random process collaboratively. The paper presents two main results: a theoretical analysis of the effects of redundant sensors and an engineering-oriented optimal design of redundant sensors. In the theoretical analysis, the paper leverages Riccati equations and Symplectic matrix theory to unveil the explicit role of redundant sensors in cooperative state estimation. The results unequivocally demonstrate that the addition of redundant sensors enhances the estimation performance of the sensor network, aligning with the principle of ``more is better". \textcolor{black}{Moreover, the paper establishes a precise sufficient and necessary condition to assess whether the inclusion of redundant sensors improves the overall estimation performance.} Moving towards engineering-oriented design optimization, the paper proposes a novel algorithm to tackle the optimal design problem of redundant sensors, and the convergence of the proposed algorithm is guaranteed. Numerical simulations are provided to demonstrate the results.
\end{abstract}

\end{frontmatter}

\section{Introduction}
A sensor network is a collection of sensors that are connected through either a serial bus or wireless communication. This type of network facilitates efficient collaboration for sensing tasks in complex or large-scale industrial systems. Over the past few decades, the applications of sensor networks have gained significant attention. Examples include lidars on intelligent vehicles for self-driving \cite{akiyama2022edge}, sensors for environmental monitoring \cite{silva2015experimental}, and pressure and temperature monitoring \cite{silveira2016temperature} for process control.

 In real-world scenarios, a large number of sensors are often necessary to achieve comprehensive and reliable observation of complex systems or to provide full coverage of large-scale systems. Consequently, the deployment of redundant sensors is almost inevitable. Recently, researchers have expressed interest in the phenomenon of redundancy in sensor networks, but an in-depth analysis is still lacking. 
 \subsection{Related Works}
Existing research on redundant sensors in the literature primarily focuses on two aspects. Firstly, there is a focus on scheduling the sensor network to conserve energy by detecting and eliminating redundant sensors \cite{9512811,10.1007/978-981-19-2719-5_57,9123882}. %For instance, in \cite{9512811}, a protocol is proposed to address energy waste and redundancy arising from overlapping nodes. The use of clustering formation is employed in \cite{10.1007/978-981-19-2719-5_57} to schedule sensors into active or sleep modes. Furthermore, a method is presented in \cite{9123882} that leverages spatial data correlations to minimize redundant sensing data transmissions in energy-harvesting sensor networks. 
 However, while redundant sensors can be eliminated or deactivated in certain environmental monitoring tasks, in industrial systems where safety is paramount, deactivating them may lead to fatal consequences if key sensors fail \cite{ZHANG2021100055}. Therefore, redundant sensors often have to remain active within the sensor network. Nonetheless, the aforementioned studies primarily focus on factors such as observation ranges or data similarities among sensors, disregarding the dynamics of sensors that are prevalent in real-world applications, such as sensors collaboratively performing Kalman filtering for state estimation.

The second research direction involves utilizing redundant sensors in a sensor network to compensate for sensor failures \cite{9639479,6127570,LIU2018458,9695499} or against cyber-attacks \cite{6882789,7805147}, which can improve the robustness of the entire sensor network. %The performance of wireless sensor networks under random node failures is analyzed and simulated in \cite{6127570}. A resilient filtering algorithm with event-triggered communication and measurement quantization is proposed to address random sensor failures in \cite{LIU2018458}. In \cite{9695499}, a distributed fault-tolerant state estimation method is developed to handle sensor faults and random link failures over sensor networks. In \cite{7805147}, a data verification-based detection method is proposed against linear deception attacks on multi-sensor remote state estimation. 
 Although these works demonstrate that redundant sensors can be used to improve the robustness of the sensor network to sensor failures and cyber-attacks, they have not quantitatively analyzed the impact of redundant sensors on the performance of the sensor network, particularly when the sensor network is collaboratively performing Kalman filtering for state estimation. Many previous studies have examined the sensor network-based Kalman filter \cite{BATTISTELLI201875,4434303,YANG2023110690,5438273,DUAN2020109007}. However, they have not explored the effects of redundant sensors either, which are ubiquitous and may influence the estimation performance.

%There are a few studies on the design of sensors. In \cite{edb73ccb-df43-368c-abbd-9f491e30c690, 8619461}, the optimal sensor design problem is investigated. The methods proposed in \cite{edb73ccb-df43-368c-abbd-9f491e30c690, 8619461} are gradient-based, and all sensors should be redesigned  for any new task. In real-world applications, most industry sensor networks have been  well preset, and users only need to add  some new redundant sensors to improve the network's robustness. In such scenarios, redesigning all sensors is costly, so an algorithm needs to be developed to provide an optimal design for the plug-and-play redundant sensors to be added into the network. 
\textcolor{black}{There are also some previous works on sensor design. In reference \cite{sensor_design1}, the investigation focuses on the sensor and actuator placement, with the objective of selecting a subset from a finite set of possible placements to optimize various real-valued controllability and observability metrics within the network. While \cite{sensor_design1} considers several metrics of the Gramian, it omits the error covariance matrix—a direct metric for evaluating state estimation performance. Reference \cite{sensor_design2} explores the co-design of control and sensing in Linear Quadratic Gaussian (LQG). The problem is formulated as the selection of sensors from a finite set, with each sensor associated with a distinct cost. In \cite{sensor_design3}, the research addresses the joint problem of selecting a set of control nodes and designing a control input for complex networks. The control nodes are selected from the entirety of the network's nodes, considering control energy. In \cite{sensor_design4,sensor_design5} %\cite{edb73ccb-df43-368c-abbd-9f491e30c690, 8619461} 
, the optimal sensor design problem is investigated. However, the methods proposed in \cite{sensor_design4,sensor_design5}  are gradient-based, requiring a complete redesign of all sensors for any new task. In practical scenarios, existing industry sensor networks are typically well preset, and users often only need to add a few new redundant sensors to enhance network robustness or implement new features. In such situations, the cost of redesigning all sensors is prohibitive, necessitating the development of an algorithm to offer an optimal design for the plug-and-play redundant sensors to be added into the network.}
	
\textcolor{black}{ Despite the progress of past work in sensor design\cite{sensor_design1,sensor_design2,sensor_design3,sensor_design4,sensor_design5}, there are still some issues that deserve our attention in this paper. Firstly, \cite{sensor_design1,sensor_design2,sensor_design3,sensor_design4,sensor_design5} only considers scalar metrics, recognizing that investigating scalar metrics may result in information loss. For instance, in the case of two positive-definite matrices $P_{1}$ and $P_{2}$, $\operatorname{tr}(P_{1})>\operatorname{tr}(P_{2})$ does not imply $P_{1}>P_{2}$. While our exploration encompasses the matrix itself, which can reflect the overall estimation performance. Additionally, while prior works \cite{sensor_design1,sensor_design2,sensor_design3} focus on selecting a subset of sensors or actuators from a finite set, posing combinatorial optimization problems, this paper concentrates on designing sensors under specific constraints by solving the semi-definite programming problem. This approach implies that the sensors are not merely selected from a set; instead, their configurations are obtained through optimization. Lastly, the previous sensor placement strategy  \cite{sensor_design4,sensor_design5} assumes a predefined set of sensors and doesn't consider scenarios where new sensors may be added or replaced dynamically. In environments where plug-and-play sensors are prevalent, the previous method becomes impractical as it requires a complete redesign when new sensors are introduced.}

Since redundant sensors are prevalent and existing studies on their effects and optimal design are insufficient, it is indeed intriguing to investigate the impact of redundant sensors in collaborative state estimation. It is particularly valuable to explore the performance gap between sensor networks with and without redundant sensors. Furthermore, it is worth delving into the optimization of redundant sensor designs to enhance state estimation accuracy. By conducting such research, we can gain insights into the potential benefits of redundant sensors and develop strategies to optimize their configurations for improved state estimation.
\subsection{Contributions}
Motivated by the aforementioned observation, this paper aims to investigate the effects and optimal design problem of redundant sensors in cooperative Kalman filter. The contributions of this paper can be summarized as follows:

\begin{enumerate}

\item[1)] \textcolor{black}{\textbf{Effect Analysis:} This paper explores the impact of redundant sensors on sensor networks, providing a quantitative analysis of their effects on the error covariance matrix in state estimation. Unlike previous works\cite{sensor_design1,sensor_design2,sensor_design3} that focus on scalar quantities like the smallest eigenvalue and traces, our analysis concentrates on the error covariance matrix itself, preventing information loss.  \textbf{Theorem 1} establishes that sensor networks with redundant sensors can enhance estimation performance, demonstrating the principle of ``more is better." Specifically, it proves that the steady-state estimation error covariance of a network with redundant sensors is less than or equal to that of a network without redundant sensors. The paper also introduces a condition determining when adding redundant sensors strictly reduces the mean square error (MSE). This analysis establishes a novel mathematical theory for examining the effect of redundant sensors on the system.}

\item[2)] \textcolor{black}{\textbf{Condition on Improving the Overall Estimation Performance:} This paper presents \textbf{Theorem 2}, delivering a comprehensive sufficient and necessary condition to determine whether the inclusion of redundant sensors strictly enhances overall estimation accuracy. In contrast to prior works \cite{sensor_design1,sensor_design2,sensor_design3,sensor_design4} that emphasize scalar metrics and provide non-strict monotonicity results, our analysis focuses on the error covariance matrix itself, establishing a strict monotonicity condition. This condition plays a pivotal role in assessing the overall improvement in  estimation accuracy resulting from the incorporation of redundant sensors. The theorem provides invaluable insights into the influence of redundant sensors.}

\item[3)]\textcolor{black}{\textbf{Optimization Algorithm}: The paper presents an algorithm addressing the optimal design problem of redundant sensors, outlined in \textbf{Theorem 3} and \textbf{Algorithm 1}. Unlike previous works \cite{sensor_design1,sensor_design2,sensor_design3} selecting sensors from a finite set, our method directly designs sensors through optimization. The proposed algorithm rigorously demonstrates convergence and optimality in \textbf{Theorem 4} and \textbf{Theorem 5}. Notably, this algorithm distinguishes itself from existing sensor design methods \cite{sensor_design5,sensor_design4} by being applicable for plug-and-play sensors, eliminating the need for redesigning the entire sensor network.}
\end{enumerate}

 \textbf{Notation:} Throughout this paper, we adopt the following notation conventions. $\mathbb{R}$ and $\mathbb{C}$ denote the sets of real numbers and complex numbers, respectively. $\mathbb{N}$ denotes the set of natural numbers. For a complex number $\lambda$, $\lambda^{\star}$ denotes its conjugate. $\mathbb{R}^{n}$ and $\mathbb{R}^{n\times n}$ denote the sets of column vectors and square matrices of dimension $n$, respectively. \textcolor{black}{$I_{n}\in \mathbb{R}^{n\times n}$ denotes the identity matrix of dimension $n$, and $e_{i,n}$ is the $i$th column in $I_{n}$. $I$ denotes the identity matrix of appropriate dimension} The scalar zero and zero matrix are denoted by 0. For any matrix $M$, $M^{T}$, $M^{H}$, and $M^{-1}$ denote its transpose, conjugate transpose, and inverse, respectively. The Moore-Penrose inverse of $M$ is denoted as $M^{\dagger}$. For a symmetric matrix $X$, $\sqrt{X}$ denotes its Cholesky decomposition, and $X>0$ ($X\geq0$) indicates that $X$ is positive definite (positive semidefinite). The notation $X \geqq 0$ implies that $X$ is positive semidefinite, and at least one eigenvalue of $X$ is zero. The operators $\|\cdot\|_{1}$, $\|\cdot\|_{2}$, and $\|\cdot\|_{\infty}$ denote the 1-norm, 2-norm, and infinity-norm, respectively. For functions $g$, $h$ with appropriate domains, $g \circ h(x)$ stands for the function composition $g(h(x))$. For matrices $X_{1}, X_{2},\ldots, X_{n}$, $\text{diag}\{X_{1}, X_{2},\ldots, X_{n}\}$ represents the corresponding block diagonal matrix. tr($\cdot$) denotes the trace operator. \textcolor{black}{ $\delta_{k j}$ denotes the Dirac delta function equals 1 if $k=j$ and 0 ortherwise. $\operatorname{Ker}(\ldots)$ denotes the kernel operator of a matrix. $\operatorname{dom}(Ric)$ is the set of a special kind of  symplectic matrix defined in Definition 2}
\section{Problem Formulation and Preliminaries}

\subsection{Modeling Formulation}

Consider the following discrete-time linear time-invariant process:
\begin{equation}\label{eq:process}
	x_{k+1} = Ax_{k}+w_{k}.
\end{equation}
In this context, the time stamp $k \in \mathbb{N}$ represents the current time. The system matrix is denoted by $A$, and the state vector at time $k$ is represented as $x_{k} = \left[x_{1,k},x_{2,k},\ldots,x_{n,k}\right]^{T} \in \mathbb{R}^{n}$, and the state comprises $n$ channels. Each channel is denoted by $x_{i,k} \in \mathbb{R}$, representing the $i$th element of $x_{k}$. The noise term $w_{k} \in \mathbb{R}^{n}$ is a zero-mean, independent, and identically distributed white Gaussian noise with a covariance matrix $Q\geq 0$. The covariance matrix of the noise is $\mathbb{E}\left[w_{k} w_{j}^{T}\right]=\delta_{k j} Q$. Moreover, the pair $(A, \sqrt{Q})$ is controllable.

The sensor network model is described as follows:
\begin{equation}\label{eq:original_sensornet}
	\bar{y}_{k}=\bar{C} x_{k}+\bar{v}_{k},
\end{equation}

where $\bar{y}_{k} = \left[\begin{array}{ccc}y^{T}_{1,k},&\ldots,& y^{T}_{N,k}\end{array}\right]^{T}\in \mathbb{R}^{\bar{m}}$, and $y_{i,k}\in \mathbb{R}^{m_{i}}$ is the measurement of sensor $i$, $N$ is the number of sensors, $\bar{C}=\left[\begin{array}{ccc}C^{T}_{1} &\ldots &C^{T}_{N}\end{array}\right]^{T}\in \mathbb{R}^{n\times\bar{m}}$, $C_{i}\in \mathbb{R}^{n\times m_{i}}$ is the output matrix associated with sensor $i$, $\bar{v}_{k} = \left[\begin{array}{ccc}
v^{T}_{1, k} &
\ldots &
v^{T}_{N, k}
\end{array}\right]^{T}\in \mathbb{R}^{\bar{m}}$, and $v_{i,k} \in \mathbb{R}^{m_{i}}$ is zero-mean independent and identically distributed white Gaussian noise with covariance matrix $R_{i}>0$. Thus, $\bar{v}_{k}\sim N\left(0,\bar{R}\right)$, where $\bar{R}=\text{diag}\{R_{1},\ldots,R_{N} \}>0$. The following assumptions are made:

\begin{assumption}\label{assum:inver}
	$A$ is invertible.
\end{assumption}
\begin{assumption}\label{assum:obs}
	The pair $\left(A,\bar{C}\right)$ is observable.
\end{assumption}
\textcolor{black}{
For most physical plants, Assumption \ref{assum:inver} is justified because system (1) can be regarded as the discretized form of a differential equation $\text{d} x = A_1 x + A_2 \text{d} w $ such that $A \triangleq  e^{A_1 h} $ is always invertible for any $A_1$. Assumption \ref{assum:obs} means that the process is collectively observable by the sensor network, that is the whole sensor network is observable, and sensors can estimate the process state cooperatively. }

By using the consensus on measurement-based distributed Kalman filtering algorithm \cite{Distributed4SteadyKalmanFilter,BATTISTELLI2016169,QianAuto1,9929319} or equipping a fusion center, the sensor network can perform the Kalman filtering cooperatively.

\begin{equation}\label{eq:kalman_filter}
\left\{\begin{array}{ll}{\hat{x}_{k | k-1}} & {=A \hat{x}_{k-1 | k-1}} \\ {P_{k | k-1}} & {=h\left(P_{k-1|k-1} \right)} \\ {K_{k} } & {=P_{k | k-1}  \bar{C}^{T}\left[\bar{C} P_{k | k-1}  \bar{C}^{T}+\bar{R}\right]^{-1}} \\ {\hat{x}_{k|k} } & {=\hat{x}_{k | k-1} +K _{k}\left(\bar{y}_{k}-\bar{C}\hat{x}_{k | k-1} \right)} \\ {P_{k|k} } & {={g}_{\bar{C},\bar{R}}\left(P_{k | k-1} \right),}\end{array}\right. 
\end{equation}
where $\hat{x}_{k | k-1} $ and $\hat{x}_{k|k} $ are the priori and posteriori minimum mean squared error estimates of the process state $x_{k}$, and $P_{k|k-1} $ and $P _{k|k}$ are the corresponding priori and posteriori error covariance matrices, which are defined as follows: 
\begin{equation}
P_{k|k-1}  = \mathbb{E}\left[(\hat{x} _{k|k-1}-x_{k})^{T}(\hat{x} _{k|k-1}-x_{k})\right],
\end{equation}
\begin{equation}
P_{k|k}  = \mathbb{E}\left[(\hat{x} _{k|k}-x_{k})^{T}(\hat{x} _{k|k}-x_{k})\right],
\end{equation}

With the Assumptions 1 and 2,  the priori and posteriori estimation error covariance matrices of the original network converge to unique steady-state values. We denote the steady-state value of the priori and posteriori estimation error covariance matrices of the original network as $\bar{P}$ and $\bar{P}_{p}$, respectively.
\begin{equation}
\bar{P}=\lim _{k \rightarrow \infty} P_{k| k-1} ,
\end{equation}
\begin{equation}
\bar{P}_{p}=\lim _{k \rightarrow \infty} P_{k} ,
\end{equation}
 $\bar{P}$ is determined by the unique positive definite solution of the following discrete-time algebraic Riccati equation (DARE):
\begin{equation}\label{eq:DARE1}
	\bar{P}=A \bar{P} A^{T}+Q-A \bar{P} \bar{C}^{T}\left(\bar{C} \bar{P} \bar{C}^{T}+\bar{R}\right)^{-1} \bar{C} \bar{P} A^{T},
\end{equation} 
which is also the solution of the equation $h\circ g_{\bar{C},\bar{R}}(\bar{P}) = \bar{P}$. Similarly, $\bar{P}_{p}$ is determined by the unique positive definite solution of the equation $g_{\bar{C},\bar{R}} \circ h(\bar{P}_{p}) = \bar{P}_{p}$, where $h$ and $g_{\bar{C},\bar{R}}$ are the Lyapunov and Riccati operators: $\mathrm{S}_{+}^{n} \rightarrow \mathrm{S}_{+}^{n}$ defined as follows:
\begin{equation}
	h(X) \triangleq A X A^{T}+Q,
\end{equation}
\begin{equation}
	g_{C,R}(X) \triangleq X-X C^{T}[ C X C^{T}+R]^{-1} C X.
\end{equation}
\textcolor{black}{\begin{remark}\textbf{The convergence and stability of the investigate estimate method:}
		It shoud be noted that this paper primarily investigates the effects and optimal design of redundant sensors in collaborative state estimation, i.e., we focus on the performance gap between sensor networks with and without redundant sensors being added, where the sensors in the network collaboratively execute a global Kalman filter which remains the same as the centralized approach. Therefore, we do not develop any new estimation strategy in this paper. Instead, the considered collaborative state estimation can be fulfilled by many previous literature\cite{Distributed4SteadyKalmanFilter,QianAuto1,RYU2023110843}, and the convergence and stability of these methods are strictly proved. Therefore, in this context, a sensor network denotes a group of sensors collaborating in their operation, the communication and local filter design is not our emphasis and is omitted.
\end{remark}} 
 The error covariance matrices converge to steady-state values exponentially, and the steady-state covariance matrices reflect the estimation performance in the steady state. Therefore, in the following, steady-state covariance matrices are analyzed.  \par
\subsection{Problem Statement} 
Redundant sensors are commonly present in various industrial processes, where a subset of sensors already provides collective observability and is capable of estimating the process state. However, the entire sensor network may contain additional redundant sensors beyond this subset. The impact of these redundant sensors on the collaborative state estimation performance of the sensor network is uncertain. While redundant sensors can be removed if they are deemed useless, they may actually have an effect on the estimation performance of the sensor network. This raises the question of how different configurations of redundant sensors can result in varying estimation performances. Therefore, studying the effects of redundant sensors is an intriguing and valuable pursuit. Furthermore, if redundant sensors are found to be beneficial, it becomes important to investigate how to optimally design them to achieve improved performance.

\textcolor{black}{By regarding the sensor network \eqref{eq:original_sensornet} as the original sensor network, we can add the following redundant sensors into the network:}
%\begin{equation}\label{eq:sensornetE}
%y_{k}= \left[\begin{array}{c}
%y_{1,k}\\\vdots\\y_{N+m,k}
%\end{array}\right]=\left[\begin{array}{c}
%H_{1} \\
%\vdots \\
%H_{N+m}
%\end{array}\right] x_{k}+\left[\begin{array}{c}
%v_{1, k} \\
%\vdots \\
%v_{N+m, k}
%\end{array}\right].
%\end{equation}

\begin{equation}
	\tilde{y}_{k} = \tilde{C} x_{k}+\tilde{v}_{k},
\end{equation}
where 
$\tilde{y}_{k}= \left[\begin{array}{ccc}
	y^{T}_{N+1,k}&\ldots&y^{T}_{N+m,k}
\end{array}\right]^{T}\in \mathbb{R}^{\tilde{m}}$ is stacked output of redundant sensors, $y_{N+i,k} \in\mathbb{R}^{m_{N+i}}$ is the output of $i$th redundant sensor,
$\tilde{C} =\left[\begin{array}{ccc}
		C^{T}_{N+1} &
		\ldots &
		C^{T}_{N+m}
	\end{array}\right]^{T}\in \mathbb{R}^{n\times \tilde{m}}$
is the collection of the $m$ redundant sensors' output matrices, $C_{N+i}$ is the output matrix of $i$th redundant sensor. $\tilde{v}_{k}= \left[\begin{array}{ccc}
v^{T}_{N+1, k} &
\ldots&
v^{T}_{N+m, k}
\end{array}\right]^{T}$ is the  collection of the corresponding measurement noises with covariance $\tilde{R}=diag\{R_{N+1},\ldots,R_{N+m} \}$. Denote $C = \left[\begin{array}{cc}
\bar{C}^{T}&\tilde{C}^{T}
\end{array}\right]^{T}$, $v_{k} = \left[\begin{array}{cc}
\bar{v}_{k}^{T}&\tilde{v}_{k}^{T}
\end{array}\right]^{T}$, $R = diag\{\bar{R},\tilde{R}\}$. 

\textcolor{black}{For simplicity, in the rest part of the paper, we use the output matrix $\bar{C}$, $\tilde{C}$ and $C$ to denote the original sensor network, the redundant sensors and the sensor network with the redundant sensors added, respectively.}

Obviously, with Assumption \ref{assum:obs}, the pair $(A,C)$ is also observable. The priori and posteriori covariances corresponding to redundant network $C$ will converge to the solutions of the following equations, respectively.
\begin{equation}\label{eq:DARE2}
P=A P A^{T}+Q-A P C^{T}\left(C P C^{T}+R\right)^{-1} C P A^{T},
\end{equation}
\begin{equation}
g_{C,R}\circ h(P_{p}) = P_{p}.
\end{equation}

\begin{figure}
	\centering
	\includegraphics[width=1\linewidth]{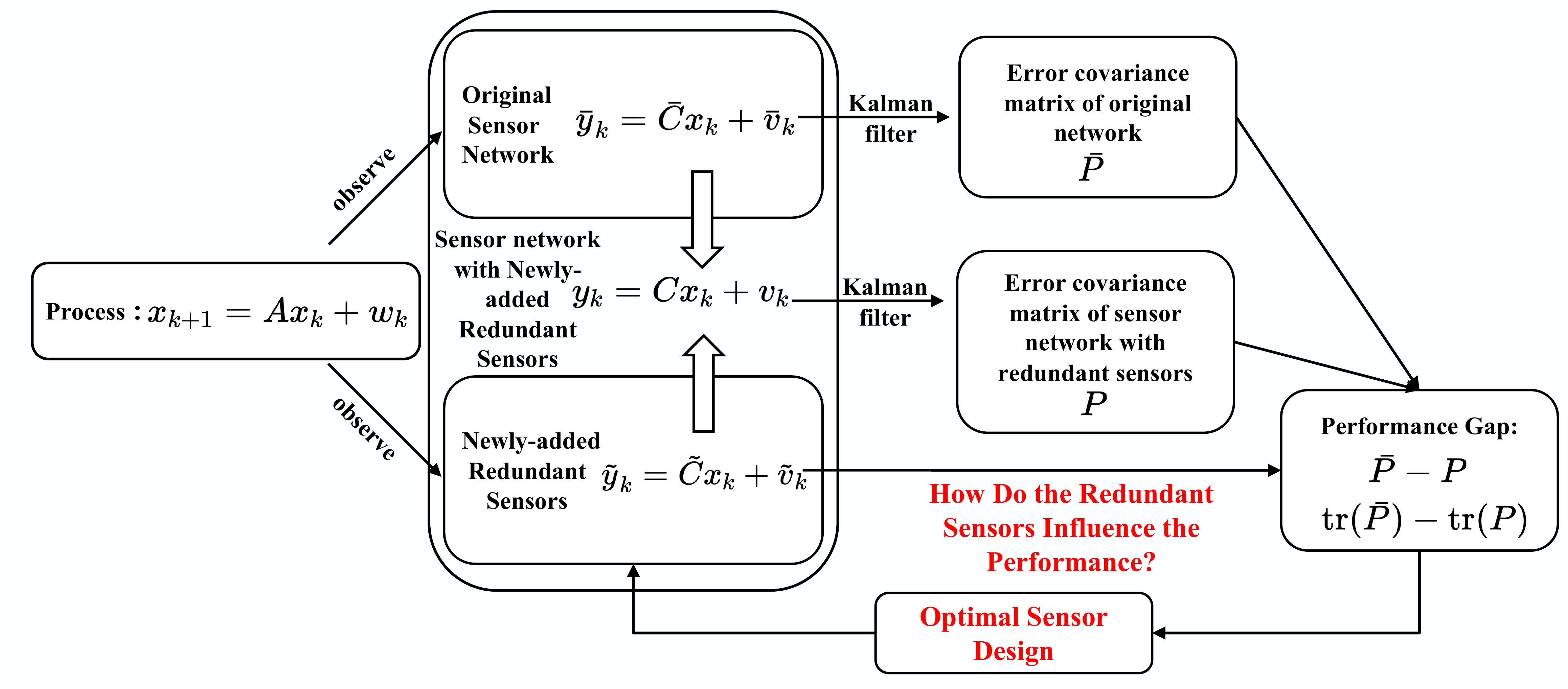}
	\caption{\textcolor{black}{The model and research problem concerned.}}
	\label{fig:structure}
\end{figure}

In this paper, two key questions are raised for investigation:
\begin{enumerate}
	\item[(1)] What are the effects of redundant sensors on the error covariance of the estimation?
	\item[(2)]  How can engineers design redundant sensors to achieve a better estimation performance?
\end{enumerate}
Fig. \ref{fig:structure} illustrates the modeling formulation and concerned problem in this paper. The two questions can be addressed by analyzing the differences in estimation performance between the original sensor network and the sensor network that includes redundant sensors.
%\begin{remark}
%It is important to highlight that, in this study, the focus is specifically on the effects and optimization of the newly added redundant sensors in the original sensor network. While the original sensor network may also have redundant sensors, the analysis and optimization in this paper specifically pertain to the newly introduced redundant sensors. The objective is to investigate the performance differences between the original sensor network and the augmented sensor network with the additional redundant sensors.
%\end{remark}
\textcolor{black}{
To formulate the above two questions into mathematical form, the following preparations are made. In the context of this paper, the MSE is defined as the expected value of the squared difference between the estimated state $\hat{x}_k$ and the true state $x_k$:}
\begin{equation}
	MSE = \mathbb{E}\left[(\hat{x}_{k}-x_{k})^{T}(\hat{x}_{k}-x_{k})\right].
\end{equation}
 This MSE can also be expressed as the trace of the corresponding error covariance matrix.

In the Kalman filter, there are two types of covariance matrices involved: the priori covariance and the posteriori covariance. Without loss of generality, this paper specifically focuses on analyzing the prior covariance matrix. Particularly, the prior covariance matrix has a direct relationship with the solution of a DARE, allowing for the utilization of various techniques to facilitate the analysis.The following lemma illustrates the relationship between the priori and posteriori covariance matrices.
\textcolor{black}{
\begin{lemma}\label{lemma:congruence}
	If Assumptions \ref{assum:inver} holds. For process \eqref{eq:process}, and two different sensor networks SN1, SN2 with output matrices and noise covariance matrices $(C_{1},R_{1})$ and  $(C_{2},R_{2})$ respectively, denote the corresponding priori and poesteriori covariance matrices as $P_{1}$, $P_{p,1}$ and $P_{2}$ and $P_{p,2}$. Then one has  $P_{1}\geq P_{2}$ if and only if $P_{p,1}\geq P_{p,2}$, and $P_{1}> P_{2}$ if and only if $P_{p,1}> P_{p,2}$, respectively.
\end{lemma}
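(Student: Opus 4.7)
The plan is to exploit the fact that the steady-state Lyapunov step $P = h(P_p) = A P_p A^{T} + Q$ depends only on the process \eqref{eq:process}, not on the sensor configuration. Hence although the two networks SN1 and SN2 give rise to different Riccati operators $g_{C_1,R_1}$ and $g_{C_2,R_2}$ (which makes the connection $P_{p,i} = g_{C_i,R_i}(P_i)$ awkward to compare across networks), the other direction $P_i = A P_{p,i} A^{T} + Q$ is common to both. This is what makes the result essentially a one-line congruence argument.

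Concretely, I would first write down the steady-state identities
\begin{equation}
P_1 = A P_{p,1} A^{T} + Q, \qquad P_2 = A P_{p,2} A^{T} + Q,
\end{equation}
and subtract to eliminate $Q$, obtaining
\begin{equation}
P_1 - P_2 = A\bigl(P_{p,1} - P_{p,2}\bigr) A^{T}.
\end{equation}
By Assumption \ref{assum:inver}, $A$ is invertible, so this relation can be inverted as
\begin{equation}
P_{p,1} - P_{p,2} = A^{-1}(P_1 - P_2) A^{-T}.
\end{equation}
Both maps $M \mapsto A M A^{T}$ and $M \mapsto A^{-1} M A^{-T}$ are congruence transformations with an invertible matrix, and therefore preserve both positive semidefiniteness and positive definiteness. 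This gives simultaneously
\begin{equation}
P_1 \geq P_2 \iff P_{p,1} \geq P_{p,2}, \qquad P_1 > P_2 \iff P_{p,1} > P_{p,2},
\end{equation}
which is exactly the claim.

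There is no real obstacle here: the only place Assumption \ref{assum:inver} is used is in guaranteeing that the congruence transformation is non-degenerate, so that strict/non-strict inequalities are preserved in both directions. I would only need to be careful to cite the steady-state form of the Lyapunov update from the Kalman recursion \eqref{eq:kalman_filter}, so that the equalities $P_i = A P_{p,i} A^{T} + Q$ are justified as statements about the steady-state covariances guaranteed to exist by Assumptions \ref{assum:inver}--\ref{assum:obs}, and not just about the time-varying iterates.
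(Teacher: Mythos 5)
Your proposal is correct and follows essentially the same route as the paper: both derive $P_1 - P_2 = A(P_{p,1}-P_{p,2})A^{T}$ from the common Lyapunov step and use invertibility of $A$ to conclude via a congruence argument (the paper cites Sylvester's law of inertia, you argue directly that invertible congruence preserves semidefiniteness and definiteness, which is the same point).
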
}
\begin{proof}
	From \eqref{eq:kalman_filter}, it can be obtained that 
	\begin{equation}
	P_{i} = AP_{p,i}A^{T}+Q ,~ i=1,2.
	\end{equation}
	Hence, one has 
	\begin{equation}\label{eq:congruence}
	P_{1}-P_{2} = A(P_{p,1}-P_{p,2})A^{T}.
	\end{equation}
	Since $A$ is invertible, \eqref{eq:congruence} means that $P_{1}-P_{2}$ and $P_{p,1}-P_{p,2}$ are congruent. From Sylvester's law of inertia, it follows that  $P_{1}-P_{2}$ and $P_{p,1}-P_{p,2}$ have the same numbers of positive, negative, and zero eigenvalues. This completes the proof.
\end{proof}
\begin{corollary}\label{corollary:con}
	If $P_{1}\geq P_{2}$, $tr(P_{1})\geq (>) tr(P_{2})$ if and only if $tr(P_{p,1})\geq (>) tr(P_{p,2})$.
\end{corollary}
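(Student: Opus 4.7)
The plan is to combine Lemma~\ref{lemma:congruence} with the congruence identity \eqref{eq:congruence} and the elementary fact that the trace is strictly positive on nonzero positive semidefinite matrices. Under the hypothesis $P_{1}\geq P_{2}$, Lemma~\ref{lemma:congruence} immediately gives $P_{p,1}\geq P_{p,2}$, so both differences $\Delta := P_{1}-P_{2}$ and $\Delta_{p}:=P_{p,1}-P_{p,2}$ are positive semidefinite. The non-strict half of the iff is then automatic, since monotonicity of the trace on the PSD cone forces $\operatorname{tr}(P_{1})\geq \operatorname{tr}(P_{2})$ and $\operatorname{tr}(P_{p,1})\geq \operatorname{tr}(P_{p,2})$ to hold on both sides.

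For the strict half, I would argue via the zero/nonzero dichotomy for PSD matrices. For any $X\geq 0$, one has $\operatorname{tr}(X)>0$ iff $X\neq 0$. Hence $\operatorname{tr}(P_{1})>\operatorname{tr}(P_{2})$ iff $\Delta\neq 0$, and $\operatorname{tr}(P_{p,1})>\operatorname{tr}(P_{p,2})$ iff $\Delta_{p}\neq 0$. Equation \eqref{eq:congruence} reads $\Delta = A\Delta_{p}A^{T}$, and since Assumption~\ref{assum:inver} guarantees that $A$ is invertible, the linear map $X\mapsto AXA^{T}$ is a bijection on the space of symmetric matrices. Therefore $\Delta = 0$ iff $\Delta_{p}=0$, which is exactly the equivalence $\Delta\neq 0 \Leftrightarrow \Delta_{p}\neq 0$ needed to close the argument.

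There is no genuine obstacle here; the main thing to get right is the logical packaging of the two cases. The non-strict version is trivially true on both sides once Lemma~\ref{lemma:congruence} is invoked, so the proof reduces almost entirely to the observation that congruence by an invertible matrix preserves the ``is zero'' property, which in turn is equivalent to the trace being strictly positive for PSD matrices.
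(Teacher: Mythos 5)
Your argument is correct and is essentially the argument the paper intends: the paper states Corollary~\ref{corollary:con} without a separate proof, treating it as an immediate consequence of Lemma~\ref{lemma:congruence} and the congruence identity \eqref{eq:congruence}, which is exactly what you spell out (congruence by the invertible $A$ preserves whether the difference is zero, and for a PSD matrix the trace is strictly positive iff the matrix is nonzero). No gap; your write-up just makes the implicit reasoning explicit.
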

By utilizing Lemma \ref{lemma:congruence} and Corollary \ref{corollary:con}, it can be deduced that analyzing the priori covariance matrix is equivalent to analyzing the posteriori covariance matrix. 

Therefore, in the subsequent sections of this paper, the comparison of priori error covariance matrices between the original sensor network and the sensor network with redundant sensors provides insights that hold for the posteriori matrices as well. With above preparation, two main problems in this paper are set up:
\textcolor{black}{
\begin{problem}\label{prob:1} Consider the estimation of the state $x_{k}$ in the process \eqref{eq:process} using the original sensor network $\bar{C}$, resulting in the error covariance matrix $\bar{P}$. Similarly, when utilizing the redundant sensor network $C = \left[\bar{C}^{T}~\tilde{C}^{T}\right]^{T}$ for the same estimation, the obtained error covariance matrix is $P$. Determine the specific relationship between $\bar{P}$ and $P$, such as $\bar{P}\geq P$, $\bar{P}\leq P$, $\bar{P} = P$, or $\bar{P} - P$ is indefinite. Furthermore, establish the quantitative relation between $\text{tr}(\bar{P})$ and $\text{tr}(P)$. \end{problem} }
\textcolor{black}{
Based on the relationship analysis in Problem 1, how can we optimize the estimation performance $P$ with respect to the configuration of redundant sensors, i.e., $\tilde{C}$? Specifically, one more  optimization problem for design optimization is formulated as follows: minimize the trace of the covariance matrix subject to the constraints that the covariance matrix satisfies the DARE \eqref{eq:DARE1} and the redundant sensors are restricted to a convex set due to the practical sensor design consideration, such as the norm bounded set.}
\textcolor{black}{
\begin{problem}\label{prob:2}(Optimal Design) Determine the optimal configuration $\tilde{C}^{*}$ for redundant sensors by minimizing the associated error covariance matrix:
	\begin{equation}
\tilde{C}^{*}	=\arg\min_{\tilde{C}} tr(P), ~ s.t.
	\end{equation}
	\begin{equation}\label{eq:p2co}
	\begin{array}{c}
	P=A P A^{T}+Q-A P \\ \times\left[\begin{array}{c}
	\bar{C}\\\tilde{C}
	\end{array}\right]^{T}\left(\left[\begin{array}{c}
	\bar{C}\\\tilde{C}
	\end{array}\right] P \left[\begin{array}{c}
	\bar{C}\\\tilde{C}
	\end{array}\right]^{T}+R\right)^{-1} \left[\begin{array}{c}
	\bar{C}\\\tilde{C}
	\end{array}\right] P A^{T},\\
	\tilde{C} \in U,~
	\end{array}
	\end{equation}
	where $U$ is the available set of redundant sensors. 
\end{problem} }
\textcolor{black}{
\begin{remark}
	In practical scenarios, designing redundant sensors is subject to constraints, as the signal-to-noise ratio cannot be amplified arbitrarily. Therefore, additional constraints need to be imposed. One common constraint is the 2-norm constraint on $\tilde{C}^j$, which can be expressed as $\|\tilde{C}^j\|_2 \leq U$, where $U$ is a user-defined upper bound determined based on practical considerations.
\end{remark}}
It is evident that Problem 1 is theoretical in nature and Problem 2 is application-oriented. Both problems pose significant challenges. Problem 1 requires explorations and does not have readily available techniques for analysis. On the other hand, Problem 2 involves an optimization problem with a strong nonlinear constraint making it difficult to cope with.
\subsection{Preliminaries}
Before presenting the main results, the following definitions are introduced.
\begin{definition}\label{def:1}
	For a matrix $M\in \mathbb{R}^{2n\times 2n}$, and $J = \left[\begin{array}{cc}
	0&-I_{n}\\I_{n}&0
	\end{array}\right]\in \mathbb{R}^{2n\times 2n}$, if 
	\begin{equation}
	J^{-1}M^{H}J = M^{-1}.
	\end{equation}
	Then, $M$ is called a symplectic matrix.
\end{definition}

For DARE \eqref{eq:DARE1}, if Assumption \ref{assum:inver} holds, there is a corresponding symplectic matrix $\bar{S}$ that has the following form
\begin{equation}
\bar{S} = \left[\begin{array}{cc}
A^{T}+G_{0}A^{-1}Q & -G_{0}A^{-1}\\
-A^{-1}Q&A^{-1}
\end{array}\right],
\end{equation}
where $G_{0} = \bar{C}^{T}\bar{R}^{-1}\bar{C} \geq 0$.
\begin{definition}
	A symplectic matrix $M \in \operatorname{dom}(Ric)$, if the following conditions are true:
	\begin{enumerate}
		\item[(1)] $M$ has no eigenvalues on the unit circle.
		\item[(2)] The invariant subspace $\alpha_{\_}(M)=Im\left(\begin{array}{c}
		T_{1}\\T_{2}
		\end{array}\right)$ of $M$ associated with stable eigenvalues and $Im\left(\begin{array}{c}
		0\\I
		\end{array}\right)$ are complementary.
	\end{enumerate}
\end{definition}

\begin{lemma}\label{lemma:simplec}(See \cite{zhou1996robust})
	If Assumptions \ref{assum:inver} and \ref{assum:obs} hold, the DARE \eqref{eq:DARE1} has a unique positive-definite solution $\bar{P}$, and the corresponding symplectic matrix $\bar{S} \in dom(Ric)$, i.e., $\bar{S}$ has no eigenvalues on the unit circle, and the invariant subspace $\alpha_{\_}(\bar{S})=Im\left(\begin{array}{c}
	\bar{X}\\\bar{Y}
	\end{array}\right)$ associated with eigenvalues $|z|<1$ and $Im\left(\begin{array}{c}
	0\\I
	\end{array}\right)$ are complementary, and $\bar{P} =\bar{Y}\bar{X}^{-1} $, 
	where $\left(\bar{X}^{T},\bar{Y}^{T}\right)^{T}$ is the matrix composed of the corresponding stable eigenvectors and generalized eigenvectors of $\bar{S}$.
\end{lemma}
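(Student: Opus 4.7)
The plan is to follow the standard development that links solutions of the DARE to the stable invariant subspace of the associated symplectic matrix, as laid out in the cited monograph. I would proceed in four main steps: (i) show $\bar{S}$ has no eigenvalues on the unit circle; (ii) establish invertibility of $\bar{X}$, i.e., complementarity of the two subspaces; (iii) define $\bar{P}=\bar{Y}\bar{X}^{-1}$, verify it is symmetric positive-definite, and check that it satisfies \eqref{eq:DARE1}; (iv) deduce uniqueness.

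For step (i), the symplectic identity $J^{-1}\bar{S}^{H}J=\bar{S}^{-1}$ from Definition \ref{def:1} implies that the eigenvalues of $\bar{S}$ occur in reciprocal-conjugate pairs $(\lambda,1/\lambda^{\star})$, so the spectrum is symmetric across the unit circle. If $\lambda=e^{i\theta}$ were an eigenvalue with eigenvector $(x^{T},y^{T})^{T}$, I would write out the two block rows of $\bar{S}(x;y)=\lambda(x;y)$ and use Assumption \ref{assum:inver} to eliminate one component; the resulting relation, combined with controllability of $(A,\sqrt{Q})$ and observability of $(A,\bar{C})$ from Assumption \ref{assum:obs}, forces $x=y=0$, giving the needed contradiction.

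For step (ii), the symplectic structure implies that for any two stable generalized eigenvectors of $\bar{S}$ the skew form $x_{1}^{H}y_{2}-y_{1}^{H}x_{2}$ vanishes, so $\bar{X}^{H}\bar{Y}$ is Hermitian. If $\bar{X}z=0$ for some nonzero $z$, then the vector $(0;\bar{Y}z)^{T}$ would lie in $\alpha_{\_}(\bar{S})$; propagating this vector through the block structure of $\bar{S}$ and using invertibility of $A$ shows that $\bar{Y}z$ generates an $\bar{S}$-invariant subspace whose associated eigenvalues are not stable, contradicting the construction of $\alpha_{\_}(\bar{S})$. With $\bar{X}$ invertible, $\bar{P}=\bar{Y}\bar{X}^{-1}$ is symmetric; rewriting $\bar{S}(\bar{X};\bar{Y})^{T}=(\bar{X};\bar{Y})^{T}\Lambda$ with $\Lambda$ stable and right-multiplying by $\bar{X}^{-1}$ reproduces the DARE \eqref{eq:DARE1} after straightforward block manipulation, and positive-definiteness of $\bar{P}$ follows from an observability-Gramian-type identity obtained by iterating the closed-loop recursion governed by $\Lambda$.

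Finally, uniqueness is immediate: any positive-definite solution to \eqref{eq:DARE1} induces an $n$-dimensional stable $\bar{S}$-invariant subspace, and since $\bar{S}$ has no eigenvalues on the unit circle and exactly $n$ stable eigenvalues by symplectic pairing, such a subspace is unique. I expect the main obstacle to be step (ii); proving invertibility of $\bar{X}$ cleanly requires simultaneous use of Assumption \ref{assum:inver} (so that $\bar{S}$ is well-defined and the block manipulations make sense) and Assumption \ref{assum:obs} (to exclude nontrivial kernels arising from unobservable modes), and it is the key place where the two hypotheses intertwine.
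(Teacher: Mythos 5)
The paper never proves this lemma --- it is imported directly from \cite{zhou1996robust} --- so there is no in-paper argument to compare against; your outline follows the same standard symplectic/stable-invariant-subspace development as that cited source, and in overall approach it is the right one.

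Two steps of your sketch are weaker than ``immediate,'' and one of them is a genuine gap. The gap is step (iv): a positive-definite solution $P$ of \eqref{eq:DARE1} does give an $n$-dimensional $\bar{S}$-invariant graph subspace spanned by the columns of $\left[\begin{array}{c}I\\ P\end{array}\right]$ (indeed $\bar{S}\left[\begin{array}{c}I\\ P\end{array}\right]=\left[\begin{array}{c}I\\ P\end{array}\right]A_{P}^{T}$ with $A_{P}=A(I+PG_{0})^{-1}$), but this subspace coincides with $\alpha_{\_}(\bar{S})$ only if $A_{P}$ is Schur stable, and proving that \emph{every} positive-definite solution is stabilizing is exactly the nontrivial content of uniqueness; as written, your step assumes what it must prove. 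The missing argument is a Stein-equation one: rewrite the DARE as $P=A_{P}PA_{P}^{T}+Q+A_{P}PG_{0}PA_{P}^{T}$ and use $P>0$ together with controllability of $(A,\sqrt{Q})$ to exclude closed-loop eigenvalues on or outside the unit circle. Secondly, in step (ii) the contradiction is not produced by invertibility of $A$: establishing that $\operatorname{Ker}\bar{X}$ is invariant under the stable block $\Lambda$ already requires the Lagrangian (Hermitian/semidefinite) structure of $\bar{X}^{H}\bar{Y}$, and then an eigenvector $z\in\operatorname{Ker}\bar{X}$ with $\Lambda z=\mu z$ yields $A(\bar{Y}z)=\mu^{-1}\bar{Y}z$ and $G_{0}\bar{Y}z=0$, i.e.\ an unstable unobservable mode of $(A,\bar{C})$, so the decisive hypothesis there is Assumption \ref{assum:obs} (you concede this only in your closing remark). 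Finally, a minor orientation slip: strict positive definiteness of $\bar{P}$ follows from controllability of $(A,\sqrt{Q})$ via a reachability-Gramian identity for the closed loop, not from an observability-Gramian-type identity.
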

Similarly, one can also get that $S \in dom(Ric)$, and $P = YX^{-1}$
\begin{equation}
S= \left[\begin{array}{cc}
A^{T}+GA^{-1}Q & -GA^{-1}\\
-A^{-1}Q&A^{-1}
\end{array}\right],
\end{equation}
where $G = G_{0}+G_{1}$ and $G_{1} = \tilde{C}^{T}\tilde{R}^{-1}\tilde{C}\geq 0$, $\left(X^{T},Y^{T}\right)^{T}$ is the matrix composed of the corresponding stable eigenvectors and generalized eigenvectors of $S$, and $\alpha_{\_}(S)=Im\left(\begin{array}{c}
X\\Y
\end{array}\right)$.
\begin{lemma}\label{lemma:invI+PG}(See \cite{zhou1996robust})
	If $\bar{S}, S \in dom(Ric)$, both $I+\bar{P}G_{0}$ and $I+PG$ are invertible.
\end{lemma}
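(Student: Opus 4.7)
The plan is to establish invertibility directly from the sign structure of the matrices involved: by Lemma \ref{lemma:simplec}, $\bar{S}, S \in \operatorname{dom}(Ric)$ already gives $\bar{P}, P > 0$, and by construction $G_0 = \bar{C}^T \bar{R}^{-1} \bar{C} \geq 0$ and $G = G_0 + G_1 \geq 0$. Although the products $\bar{P} G_0$ and $P G$ are not symmetric, the symmetry of each individual factor is enough to rule out a $-1$ eigenvalue via a short quadratic-form argument, which is the only mechanism by which $I + \bar{P} G_0$ or $I + P G$ could fail to be invertible.

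For $I + \bar{P} G_0$, I would argue by contradiction. Suppose $(I + \bar{P} G_0) x = 0$ for some nonzero $x \in \mathbb{C}^n$, i.e., $x = -\bar{P} G_0 x$. Left-multiplying by $G_0$ gives $G_0 x = -G_0 \bar{P} G_0 x$, and taking the Hermitian inner product with $x$ yields
\begin{equation*}
x^H G_0 x = -x^H G_0 \bar{P} G_0 x.
\end{equation*}
The left-hand side is nonnegative because $G_0 \geq 0$, while the right-hand side is nonpositive because $\bar{P} \geq 0$, so both must vanish. In particular $x^H G_0 x = 0$ together with $G_0 \geq 0$ forces $G_0 x = 0$. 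Substituting back into $x = -\bar{P} G_0 x$ gives $x = 0$, contradicting $x \neq 0$. Hence $I + \bar{P} G_0$ is invertible.

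The argument for $I + P G$ is verbatim after replacing $(\bar{P}, G_0)$ by $(P, G)$, using $P > 0$ and $G \geq 0$. I do not foresee a genuine obstacle; the only subtlety is recognizing that one should left-multiply by $G_0$ (rather than work directly with the nonsymmetric $\bar{P} G_0$) in order to produce a symmetric quadratic form amenable to definiteness reasoning. An alternative route would exploit the similarity $I + \bar{P} G_0 = \bar{P}^{1/2}\bigl(I + \bar{P}^{1/2} G_0 \bar{P}^{1/2}\bigr) \bar{P}^{-1/2}$, where $I + \bar{P}^{1/2} G_0 \bar{P}^{1/2}$ is positive definite with all eigenvalues at least $1$; I would mention this in passing but prefer the contradiction route for its uniform treatment of both claims.
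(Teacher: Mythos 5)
Your argument is correct, and in fact the paper offers no proof of this lemma at all: it is stated with a citation to \cite{zhou1996robust}, so the comparison is between your self-contained argument and the textbook fact being invoked. Your route is the elementary one: since $\bar{P}>0$ (from Lemma \ref{lemma:simplec} under Assumptions \ref{assum:inver}--\ref{assum:obs}, note this is where positivity really comes from, not from membership in $\operatorname{dom}(Ric)$ alone) and $G_{0}\geq 0$, any $x$ with $(I+\bar{P}G_{0})x=0$ forces $x^{H}G_{0}x=-(G_{0}x)^{H}\bar{P}(G_{0}x)$, whence $G_{0}x=0$ and then $x=0$; the same words apply to $(P,G)$. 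This is exactly the standard reasoning underlying the cited result, namely that the product of two positive semidefinite matrices has only real nonnegative eigenvalues (equivalently, your similarity $I+\bar{P}G_{0}=\bar{P}^{1/2}\bigl(I+\bar{P}^{1/2}G_{0}\bar{P}^{1/2}\bigr)\bar{P}^{-1/2}$ shows every eigenvalue of $I+\bar{P}G_{0}$ is at least $1$), so $-1$ can never be an eigenvalue of $\bar{P}G_{0}$ or $PG$. What your version buys is independence from the machinery in \cite{zhou1996robust}: it needs only $\bar{P},P\geq 0$ and $G_{0},G\geq 0$ together with the fact that $x^{H}Mx=0$ with $M\geq 0$ implies $Mx=0$, and it would remain valid even without strict positivity of the Riccati solutions. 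The only caveat worth recording is the one above: as stated, the lemma's hypothesis ``$\bar{S},S\in\operatorname{dom}(Ric)$'' does not by itself certify $\bar{P},P\geq 0$; you should say explicitly that you are working under the paper's standing Assumptions \ref{assum:inver} and \ref{assum:obs}, which via Lemma \ref{lemma:simplec} supply the positive (semi)definiteness your quadratic-form step uses.
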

\begin{lemma}\label{lemma:symmetric}(See \cite{zhou1996robust})
	If $\lambda \in \mathbb{C}$ is an eigenvalue of a symplectic matrix $M$, then $1/{\lambda}$, $\lambda^{\star}$ and $1/\lambda^{\star}$ are also eigenvalues of $M$.
\end{lemma}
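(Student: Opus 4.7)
The plan is to leverage the defining identity $J^{-1}M^{H}J = M^{-1}$, rewritten as $M^{H} = J M^{-1} J^{-1}$, together with the real-valuedness of the symplectic matrices to which the paper actually applies this lemma (i.e., the Riccati symplectic matrices $\bar{S}$ and $S$). First, I would take determinants in the defining identity; using $\det(J^{-1})\det(J)=1$, this gives $|\det(M)|^{2}=1$, so $M$ is invertible and every eigenvalue $\lambda$ is nonzero, which makes $1/\lambda$ well defined.

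Next, the rewritten identity $M^{H} = J M^{-1} J^{-1}$ says that $M^{H}$ and $M^{-1}$ are similar through $J$, and hence share their set of eigenvalues. The eigenvalues of $M^{-1}$ are exactly the reciprocals of the eigenvalues of $M$, while the eigenvalues of $M^{H}$ are the complex conjugates of the eigenvalues of $M$ (a one-line characteristic-polynomial calculation gives $\det(M^{H}-\mu I)=\overline{\det(M-\bar\mu I)}$). Equating the two resulting sets yields: for every eigenvalue $\lambda$ of $M$, the number $1/\lambda^{\star}$ is also an eigenvalue of $M$.

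Finally, the real-valuedness of $M$ gives a characteristic polynomial with real coefficients, so its complex roots come in conjugate pairs; hence $\lambda$ being an eigenvalue forces $\lambda^{\star}$ to be one as well. Feeding $\lambda^{\star}$ into the previous similarity step produces $1/(\lambda^{\star})^{\star}=1/\lambda$, and one obtains the full list $\{\lambda,\lambda^{\star},1/\lambda,1/\lambda^{\star}\}$ of eigenvalues asserted by the lemma.

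The only subtlety I would flag as the main (and really only) obstacle is the paper's use of $M^{H}$ rather than $M^{T}$ in Definition \ref{def:1}: the similarity argument on its own outputs only $1/\lambda^{\star}$, so reality of $M$ is genuinely needed as a second ingredient in order to peel off the remaining two eigenvalues $\lambda^{\star}$ and $1/\lambda$ individually. For the symplectic matrices actually used in the paper this reality holds by construction, so no additional hypothesis is required.
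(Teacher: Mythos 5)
Your argument is correct, and note that the paper itself gives no proof of this lemma --- it is quoted directly from \cite{zhou1996robust} --- so there is no internal proof to compare against; what you supply is essentially the standard textbook argument. The chain is sound: from $J^{-1}M^{H}J=M^{-1}$ you get $M^{H}=JM^{-1}J^{-1}$, hence $\sigma(M^{H})=\sigma(M^{-1})$; combining $\sigma(M^{H})=\{\lambda^{\star}:\lambda\in\sigma(M)\}$ with $\sigma(M^{-1})=\{1/\lambda:\lambda\in\sigma(M)\}$ gives $1/\lambda^{\star}\in\sigma(M)$, and the real-coefficient characteristic polynomial then yields $\lambda^{\star}$, and feeding $\lambda^{\star}$ back through the similarity yields $1/\lambda$. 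The subtlety you flag is genuine but is already resolved inside the paper: Definition~\ref{def:1} takes $M\in\mathbb{R}^{2n\times 2n}$, so every symplectic matrix in the paper's sense is real and $M^{H}=M^{T}$; your use of realness is therefore not an extra hypothesis imported from the particular matrices $\bar{S}$ and $S$, but part of the stated definition, and your proof establishes the lemma exactly as stated (your caution is still well placed, since for a genuinely complex $M$ satisfying the $M^{H}$ identity the similarity alone delivers only the pairing $\lambda\mapsto 1/\lambda^{\star}$). A minor embellishment, not required by the statement, is that the similarity also matches algebraic multiplicities, so the quadruple $\{\lambda,\lambda^{\star},1/\lambda,1/\lambda^{\star}\}$ occurs with consistent multiplicities.
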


\begin{lemma}(Schur complement, see \cite{zhang2006schur})\label{lemma:Schur}
Given the matrices $A_{11}=A_{11}^{T}$, $A_{22}=A_{22}^{T}$, and $A_{12}$ with appropriate dimensions, the following inequalities are equivalent:
	
	1. $\left[\begin{array}{ll}A_{11} & A_{12} \\ A_{12}^{T} & A_{22}\end{array}\right] \leq 0$;
	
	2. $A_{22}=A_{22}^{T} \leq 0 ; A_{11}-A_{12} A_{22}^{-1} A_{12}^{T} \leq 0$;
	
	3. $A_{11}=A_{11}^{T} \leq 0 ; A_{22}-A_{12}^{T} A_{11}^{-1} A_{12} \leq 0$.
\end{lemma}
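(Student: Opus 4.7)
The plan is to establish both equivalences via block-diagonalizing congruence transformations, exploiting Sylvester's law of inertia: since a congruence $M \mapsto TMT^{T}$ with invertible $T$ preserves the signature, negative semidefiniteness of the full block matrix transfers to whichever block-diagonal form is most convenient. In effect, the proof is nothing more than completing the square in matrix form.

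For the equivalence $(1) \Leftrightarrow (2)$, since $A_{22}^{-1}$ appears in (2) I take $A_{22}$ to be invertible and introduce the unit-determinant block triangular matrix $T = \begin{bmatrix} I & -A_{12}A_{22}^{-1} \\ 0 & I \end{bmatrix}$. Direct multiplication yields $T \begin{bmatrix} A_{11} & A_{12} \\ A_{12}^{T} & A_{22} \end{bmatrix} T^{T} = \begin{bmatrix} A_{11} - A_{12}A_{22}^{-1}A_{12}^{T} & 0 \\ 0 & A_{22} \end{bmatrix}$. Because $T$ is invertible, the block matrix on the left is negative semidefinite if and only if the block-diagonal matrix on the right is. A symmetric block-diagonal matrix is negative semidefinite exactly when each diagonal block is, which is precisely condition (2).

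For $(1) \Leftrightarrow (3)$, I would repeat the argument symmetrically using the invertible triangular matrix $T' = \begin{bmatrix} I & 0 \\ -A_{12}^{T}A_{11}^{-1} & I \end{bmatrix}$, which congruence-transforms the block matrix into $\begin{bmatrix} A_{11} & 0 \\ 0 & A_{22} - A_{12}^{T}A_{11}^{-1}A_{12} \end{bmatrix}$. Hence the original matrix is negative semidefinite iff $A_{11} \leq 0$ and the second Schur complement is negative semidefinite, i.e.\ condition (3).

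The only delicate point, glossed over by the statement through the implicit invertibility of whichever pivot block's inverse appears, is what happens when the pivot is only negative semidefinite. A fully general version would require either strict definiteness of the pivot or, more broadly, a Moore-Penrose pseudoinverse together with a range-inclusion condition forcing $A_{12}$ to map into the column space of the pivot; since the statement as written uses ordinary inverses these refinements are not needed, and the congruence argument above suffices.
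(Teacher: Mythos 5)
Your proof is correct. Note that the paper itself offers no proof of this lemma: it is stated as a quoted result from the cited reference on Schur complements, so there is no in-paper argument to compare against. Your route—congruence by the unit block-triangular matrices $T$ and $T'$, block-diagonalizing the matrix and invoking invariance of (semi)definiteness under congruence—is the standard "matrix completion of the square" derivation, and the computations you sketch ($TMT^{T}$ producing $\mathrm{diag}\{A_{11}-A_{12}A_{22}^{-1}A_{12}^{T},\,A_{22}\}$, and symmetrically for $T'$) are right. You were also correct to flag the one loose point in the statement: condition (1) only yields $A_{22}\le 0$ (resp.\ $A_{11}\le 0$), so the ordinary inverses in (2) and (3) implicitly assume the pivot block is nonsingular, i.e.\ strictly negative definite; in every application in the paper (pivots such as $-I$, $-\tilde{R}$, $-X$ with $X>0$, or $-(P^{i}+P^{i}GP^{i})$) this holds, so your observation that the Moore--Penrose/range-inclusion refinement is not needed here is accurate.
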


\begin{lemma}
	\label{lemma:matrix_inv}
	(Woodbury matrix identity). For any matrix $P$, $Q$, $C$ of
	proper dimensions, if $P^{-1}$
	and $Q^{-1}$
	exist, then the following equality
	holds:
	\begin{equation}
	\left(P^{-1}+C^T Q^{-1} C\right)^{-1}=P-P C^T\left(C P C^T+Q\right)^{-1} C P.
	\end{equation}
\end{lemma}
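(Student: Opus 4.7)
The plan is to verify the Woodbury identity by direct algebraic computation: I will multiply the proposed inverse $P - PC^T(CPC^T+Q)^{-1}CP$ on the left by $(P^{-1} + C^T Q^{-1} C)$ and show the product equals the identity matrix $I$. Since $P^{-1}$ exists by hypothesis, showing this is enough to conclude that the right-hand side is indeed the inverse of the left-hand side. Before starting, I would note that the matrix $CPC^T + Q$ that appears on the right-hand side is invertible: $Q$ is invertible and $CPC^T$ is positive semidefinite whenever $P$ is, and in the general case invertibility can be argued using a Schur complement argument applied to a block matrix built from $P^{-1}$, $C$, and $Q^{-1}$ (Lemma~\ref{lemma:Schur}).

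The main computation is to expand
\begin{equation}
\bigl(P^{-1} + C^T Q^{-1} C\bigr)\bigl(P - P C^T (C P C^T + Q)^{-1} C P\bigr)
\end{equation}
into four terms and group them. The first term gives $I$, and I would collect the remaining three terms, each containing a factor of $CP$ on the right. The goal is then to show the bracketed coefficient reduces to zero. Specifically, I would factor out $Q^{-1}$ on the left and $(CPC^T + Q)^{-1}CP$ on the right of the ``mixed'' terms, obtaining an expression of the form $Q^{-1}CP - Q^{-1}(Q + CPC^T)(CPC^T + Q)^{-1}CP$, which collapses telescopically to zero.

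The only non-routine aspect is the bookkeeping when regrouping terms, which I would organize by factoring $C^T$ on the left and $CP$ on the right of all cross terms, leaving a scalar-matrix-like identity in the middle bracket. There is no real obstacle, since the identity is purely algebraic once the invertibility of $CPC^T + Q$ is granted; the step that needs the most care is ensuring the direction of multiplication and transposition is consistent with the asymmetric placement of $C$ and $C^T$ in the statement. Having verified both sides match, the identity follows, completing the proof.
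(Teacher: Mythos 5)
The paper does not prove this lemma at all: it is stated as a standard fact (the Woodbury identity) and used as a black box, so there is no in-paper argument to compare against. Your direct verification is the standard proof and it is correct: expanding $\bigl(P^{-1}+C^TQ^{-1}C\bigr)\bigl(P-PC^T(CPC^T+Q)^{-1}CP\bigr)$ gives $I + C^T\bigl[Q^{-1}-(CPC^T+Q)^{-1}-Q^{-1}CPC^T(CPC^T+Q)^{-1}\bigr]CP$, and the bracket collapses since $Q^{-1}-Q^{-1}(Q+CPC^T)(CPC^T+Q)^{-1}=0$; for square matrices the one-sided identity $AB=I$ already yields invertibility of $A$ and $B=A^{-1}$, so checking one side suffices, as you say.

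One caveat worth fixing: your preliminary remark that invertibility of $CPC^T+Q$ ``in the general case can be argued using a Schur complement argument'' is not correct under the lemma's stated hypotheses alone. With merely invertible $P$ and $Q$ (no definiteness), $CPC^T+Q$ can be singular — e.g.\ the scalar case $P=-1$, $Q=1$, $C=1$ gives $CPC^T+Q=0$ (and indeed $P^{-1}+C^TQ^{-1}C=0$ too). So either treat the invertibility of $CPC^T+Q$ (equivalently, of $P^{-1}+C^TQ^{-1}C$) as an implicit part of the hypothesis, since the formula presupposes it, or note that in the paper's actual use the matrices are an error covariance $P\geq 0$ and a noise covariance $R>0$, for which $CPC^T+Q>0$ automatically. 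With that adjustment your argument is complete.
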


\section{The Effects of Redundant Sensors}
This section aims to investigate the relationship between the error covariance matrices of the original sensor network and the redundant sensor network. Firstly, we establish that the difference between the error covariance matrices of the two networks is positive semidefinite. Moreover, we demonstrate that the steady-state MSE of the redundant sensor network is smaller than that of the original sensor network. Additionally, we provide a sufficient and necessary condition to determine when the difference between the covariance matrices is strictly positive definite. This condition indicates that the newly added redundant sensors effectively enhance the estimation performance of each state element.

\textcolor{black}{From Woodbury matrix identity, the DAREs \eqref{eq:DARE1} and \eqref{eq:DARE2} can be rewritten as follows:}
\begin{equation}\label{eq:DARE11}
A\left(I+\bar{P}G_{0}\right)^{-1}\bar{P}A^{T}-\bar{P}+Q=0,
\end{equation}
\begin{equation}\label{eq:DARE12}
A\left(I+PG\right)^{-1}PA^{T}-P+Q=0,
\end{equation}
where $\bar{P}$ and $P$ are the priori steady-state error covariance matrices of the original sensor network and the redundant sensor network, respectively. The following assumption is made to avoid trivial problems:
\textcolor{black}{
\begin{assumption}\label{assum:3}
	$\tilde{C} \neq 0$, i.e., $G = \tilde{C}^{T}\tilde{R}^{-1}\tilde{C}\neq 0$.
\end{assumption}}

Assumption \ref{assum:3} guarantees that redundant sensors are active, which means that at least one redundant sensor can get the measurement of the process.

The following theorem establishes a significant relationship between $\bar{P}$ and $P$, as well as between $\text{tr}(\bar{P})$ and $\text{tr}(P)$.

%\begin{theorem}\label{thm:1}
%	If Assumptions \ref{assum:inver} and \ref{assum:obs} hold, and $\bar{P}$ and $P$ are solutions of the DAREs \eqref{eq:DARE11} and \eqref{eq:DARE12} respectively, which denote the priori error covariance matrices obtained by original sensor network $\bar{C}$ and the sensor network with redundant sensors $C = \left[\bar{C}^{T}~ \tilde{C}^{T}\right]^{T}$, one has 
%	\begin{enumerate}
%		\item[(1)] The covariance matrix of original sensor network is greater than or equal to that of redundant sensor network, i.e, $\bar{P}\geq P$;
%		\item[(2)]  The MSE of the original sensor network is greater than the MSE of the redundant sensor network,  i.e., $\operatorname{tr}(\bar{P})> \operatorname{tr}(P)$, if and only if Assumption \ref{assum:3} also holds.
%	\end{enumerate}
%\end{theorem}
\textcolor{black}{
\begin{theorem}\label{thm:1}
	If Assumptions \ref{assum:inver} and \ref{assum:obs} hold, and $\bar{P}$ and $P$ are solutions of the DAREs \eqref{eq:DARE11} and \eqref{eq:DARE12}, representing the priori error covariance matrices obtained by the original sensor network $\bar{C}$ and the sensor network with redundant sensors $C = \left[\bar{C}^{T}~ \tilde{C}^{T}\right]^{T}$, respectively, the following statements hold:
	\begin{enumerate}
		\item[(1)] The covariance matrix of the original sensor network is greater than or equal to that of the redundant sensor network, i.e., $\bar{P}\geq P$.
		\item[(2)] The MSE of the original sensor network is greater than the MSE of the redundant sensor network, i.e., $\operatorname{tr}(\bar{P})> \operatorname{tr}(P)$, if and only if Assumption \ref{assum:3} also holds.
	\end{enumerate}
\end{theorem}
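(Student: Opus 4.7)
The plan is to establish the two statements in sequence by leveraging monotonicity properties of the discrete algebraic Riccati operator. Introduce the shorthand $\Phi_{G}(X) \triangleq A(I + X G)^{-1} X A^{T} + Q$, so that \eqref{eq:DARE11} and \eqref{eq:DARE12} become the fixed-point equations $\Phi_{G_{0}}(\bar{P}) = \bar{P}$ and $\Phi_{G}(P) = P$. Two elementary facts will drive the argument: (a) for fixed $G \geq 0$, $\Phi_{G}$ is monotone in its argument, i.e., $X_{1} \geq X_{2} \geq 0$ implies $\Phi_{G}(X_{1}) \geq \Phi_{G}(X_{2})$; and (b) for fixed $X \geq 0$, $\Phi_{G}$ is anti-monotone in $G$, i.e., $G \geq G^{\prime} \geq 0$ implies $\Phi_{G}(X) \leq \Phi_{G^{\prime}}(X)$. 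Fact (a) is classical for the DARE recursion. For fact (b), I would rewrite $(I + X G)^{-1} X = X^{1/2}\bigl(I + X^{1/2} G X^{1/2}\bigr)^{-1} X^{1/2}$, after which the comparison reduces to inversion of a positive-definite matrix and the L\"owner order is reversed.

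With these ingredients, statement (1) is quick. Since $G = G_{0} + G_{1} \geq G_{0}$, fact (b) gives $\Phi_{G}(\bar{P}) \leq \Phi_{G_{0}}(\bar{P}) = \bar{P}$. Iterating and invoking fact (a), the sequence $\{\Phi_{G}^{k}(\bar{P})\}_{k \geq 0}$ is monotonically non-increasing and bounded below by $0$, hence convergent. Its limit is a positive semi-definite fixed point of $\Phi_{G}$, and by Lemma \ref{lemma:simplec} this fixed point is unique and equals $P$. Therefore $\bar{P} \geq P$.

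For statement (2), I would dispatch the ``only if'' direction by contraposition: if $\tilde{C} = 0$, then $G_{1} = 0$, the two DAREs coincide, uniqueness of the positive-definite solution forces $\bar{P} = P$, and so $\operatorname{tr}(\bar{P}) = \operatorname{tr}(P)$. For the ``if'' direction under Assumption \ref{assum:3}, suppose for contradiction that $\bar{P} = P$. Subtracting \eqref{eq:DARE11} from \eqref{eq:DARE12} and cancelling $A$ on the left and $A^{T}$ on the right (permitted by Assumption \ref{assum:inver}) yields $\bigl[(I + P G_{0})^{-1} - (I + P G)^{-1}\bigr] P = 0$. The resolvent identity $B^{-1} - C^{-1} = B^{-1}(C - B) C^{-1}$ rewrites this as $(I + P G_{0})^{-1} P G_{1} (I + P G)^{-1} P = 0$, and invertibility of $P$ together with Lemma \ref{lemma:invI+PG} then forces $G_{1} = 0$, contradicting Assumption \ref{assum:3}. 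Consequently $\bar{P} - P$ is a nonzero positive semi-definite matrix, so $\operatorname{tr}(\bar{P} - P) > 0$, and the strict trace inequality follows from statement (1).

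I expect the main technical hurdle to be the algebraic step in the strict-inequality direction: combining the resolvent identity with the invertibility facts of Lemma \ref{lemma:invI+PG} to isolate $G_{1}$ cleanly from a single matrix equation. The remaining ingredients---monotonicity of the Riccati recursion in both its argument and in the information matrix, and the uniqueness of the positive-definite fixed point---are either classical or already supplied by the preliminaries.
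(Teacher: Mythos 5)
Your proposal is correct, but statement (1) is proved by a genuinely different route than the paper's. The paper introduces the closed-loop matrix $A_{G_{0}}=A(I+\bar{P}G_{0})^{-1}$, rewrites \eqref{eq:DARE11} in closed-loop form, subtracts \eqref{eq:DARE12}, and shows that $\bar{P}-P$ satisfies the Lyapunov-type identity \eqref{eq:LyaFg0} with Schur-stable $A_{G_{0}}$ and a positive semidefinite inhomogeneity, which yields $\bar{P}\geq P$; that identity is then reused in the proof of statement (2) and again in the kernel analysis behind Theorem \ref{thm:2}. You instead rely on the two classical comparison properties of the Riccati map $\Phi_{G}$ (monotone in its argument, anti-monotone in $G$) and run the non-increasing iteration $\Phi_{G}^{k}(\bar{P})$ down to a fixed point. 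This is shorter and avoids the stability/Lyapunov machinery, at the price of producing only the ordering rather than an explicit equation for $\bar{P}-P$ that the paper later exploits. For statement (2), your argument is essentially the paper's: the resolvent identity $(I+PG_{0})^{-1}-(I+PG)^{-1}=(I+PG_{0})^{-1}PG_{1}(I+PG)^{-1}$ is exactly the computation carried out in \eqref{eq:transdiff}, applied by you directly to the difference of the two DAREs, and your ``only if'' direction coincides with the paper's necessity step.

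One step worth tightening: the limit of your iteration is a priori only a positive semidefinite fixed point of $\Phi_{G}$, whereas Lemma \ref{lemma:simplec} asserts uniqueness only among positive definite solutions. Under the standing hypotheses this is easy to repair: by monotonicity $\Phi_{G}^{k}(\bar{P})\geq\Phi_{G}^{k}(0)$, the sequence $\Phi_{G}^{k}(0)$ is non-decreasing, and controllability of $(A,\sqrt{Q})$ makes $\Phi_{G}^{n}(0)$ positive definite, so the limit is positive definite and Lemma \ref{lemma:simplec} then identifies it with $P$. With that sentence added, your proof is complete.
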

}
%\begin{proof}
%	The proof is given in \cite{ren2023effects}.
%\end{proof}
	\begin{proof}1): 
		From Assumptions \ref{assum:inver} and \ref{assum:obs}, and $\bar{R}>0$, $\hat{R}>0$, one has that $\bar{P}>0$ and $P>0$. 
		By $G\geq G_{0}$, it can be obtained that 
		\begin{equation}\label{eq:barPP}
			A\left(I+PG_{0}\right)^{-1}PA^{T} \geq 	A\left(I+PG\right)^{-1}PA^{T}.
		\end{equation}
		Define 
		\begin{equation}\label{eq:Fg0}
			\begin{aligned}
				A_{G_{0}} &= A-A\bar{P}\bar{C}^{T}\left(\bar{C}\bar{P}\bar{C}^{T}+\bar{R}\right)^{-1}\bar{C}\\
				&=A\left(I-\bar{P}\bar{C}^{T}\left(\bar{C}\bar{P}\bar{C}^{T}+\bar{R}\right)^{-1}\bar{C}\right)\\
				&=A(I+\bar{P}G_{0})^{-1}.
			\end{aligned}
		\end{equation}
		One has 
		\begin{equation}\label{eq:inFg0}
			\begin{aligned}
				&A\bar{P}\bar{C}^{T}\left(\bar{C}\bar{P}\bar{C}^{T}+R\right)^{-1}\\
				&=A\left(\bar{P}\bar{C}^{T}\bar{R}^{-1}-\bar{P}G_{0}\left(I+\bar{P}G_{0}\right)^{-1}\bar{P}\bar{C}^{T}\bar{R}^{-1}\right)\\
				& = A\left(I+\bar{P}G_{0}\right)^{-1}\bar{P}\bar{C}^{T}\bar{R}^{-1}= A_{G_{0}}\bar{P}\bar{C}^{T}\bar{R}^{-1}.
			\end{aligned}
		\end{equation}
		Substituting \eqref{eq:Fg0} and \eqref{eq:inFg0} into \eqref{eq:DARE1}, one has \eqref{eq:DARE1} can be transformed  as 
		\begin{equation}\label{eq:RicFg0}
			\begin{aligned}
				A_{G_{0}}\bar{P}A_{G_{0}}^{T}-\bar{P}+Q+A_{G_{0}}\bar{P}G_{0}\bar{P}A_{G_{0}}^{T}=0.
			\end{aligned}
		\end{equation}
		From \eqref{eq:RicFg0}, one can get that $A_{G_{0}}$ is Schur stable. 
		Subtracting \eqref{eq:DARE12} from \eqref{eq:RicFg0}, it can be obtained that
		\begin{equation}\begin{aligned}\label{eq:subFg0}
				&A_{G_{0}}\left(\bar{P}-P\right)A_{G_{0}}^{T} -\left(\bar{P}-P\right)+A_{G_{0}}PA_{G_{0}}^{T}\\&+A_{G_{0}}\bar{P}G_{0}\bar{P}A_{G_{0}}^{T}-	A\left(I+PG\right)^{-1}PA^{T} = 0.
			\end{aligned}
		\end{equation}
		By adding and subtracting $A\left(I+PG_{0}\right)^{-1}PA^{T}$ from the left hand side of \eqref{eq:subFg0}, one has 
		\begin{equation}
			\label{eq:subFg01}
			\begin{aligned}
				&A_{G_{0}}\left(\bar{P}-P\right)A_{G_{0}}^{T} -\left(\bar{P}-P\right)+A_{G_{0}}PA_{G_{0}}^{T}\\&+A_{G_{0}}\bar{P}G_{0}\bar{P}A_{G_{0}}^{T}-A\left(I+PG_{0}\right)^{-1}PA^{T}\\&+A\left(I+PG_{0}\right)^{-1}PA^{T}-	A\left(I+PG\right)^{-1}PA^{T} = 0.
			\end{aligned}
		\end{equation}
		Note the fact that  $A_{G_{0}}$ is Schur stable, and 
		\begin{equation}
			\begin{aligned}
				&A_{G_{0}}PA_{G_{0}}^{T}+A_{G_{0}}\bar{P}G_{0}\bar{P}A_{G_{0}}^{T}-A\left(I+PG_{0}\right)^{-1}PA^{T}
				\\=& A_{G_{0}}PA_{G_{0}}^{T}+A_{G_{0}}\bar{P}G_{0}\bar{P}A_{G_{0}}^{T} \\&-  A_{G_{0}}(I+\bar{P}G_{0})(I+PG_{0})^{-1}P(I+\bar{P}G_{0})^{T}A_{G_{0}}^{T}
				\\=&  A_{G_{0}}(\bar{P}-P)G_{0}(I+PG_{0})^{-1}(\bar{P}-P) A_{G_{0}}^{T}\geq 0,
			\end{aligned}
		\end{equation}
		and 
		\begin{equation}
			A\left(I+PG_{0}\right)^{-1}PA^{T} -	A\left(I+PG\right)^{-1}PA^{T}\geq 0.
		\end{equation}
		Therefore, one has the following Lyapunov equation
		\begin{equation}\label{eq:LyaFg0}
			\begin{aligned}
				&A_{G_{0}}\left(\bar{P}-P\right)A_{G_{0}}^{T} -\left(\bar{P}-P\right) \\+&A_{G_{0}}(\bar{P}-P)G_{0}(I+PG_{0})^{-1}(\bar{P}-P) A_{G_{0}}^{T}\\+&A\left(I+PG_{0}\right)^{-1}PA^{T} -	A\left(I+PG\right)^{-1}PA^{T}= 0,
			\end{aligned}
		\end{equation}
		and
		\begin{equation}
			\bar{P}-P\geq0.
		\end{equation}
		
		2):  Since $\bar{P}\geq P$, $\operatorname{tr}(\bar{P}) \geq \operatorname{tr}(P)$ and $\operatorname{tr}(\bar{P}) =\operatorname{tr}(P)$ if and only if $\bar{P}=P$. Therefore,  $\operatorname{tr}(\bar{P}) >\operatorname{tr}(P)$ if and only if $\bar{P}\neq P$.
		
		\textbf{Sufficiency:} if $G1 \neq 0$, and $tr(\bar{P}) =tr(P)$, that is $\bar{P}=P$. Substituting $\bar{P}=P$ into \eqref{eq:LyaFg0}, it can be obtained that 
		\begin{equation}\label{eq:diff=0}
			\begin{aligned}
				A\left(I+PG_{0}\right)^{-1}PA^{T} -	A\left(I+PG\right)^{-1}PA^{T}=0.
			\end{aligned}
		\end{equation}
		Equation \eqref{eq:diff=0} can be transformed as
		\begin{equation}\label{eq:transdiff}
			\begin{aligned}
				&A\left(I+PG_{0}\right)^{-1}PA^{T} -	A\left(I+PG\right)^{-1}PA^{T}
				\\=& A\left(I+PG_{0}\right)^{-1}PA^{T} - A\left(I+PG_{0}+PG_{1}\right)^{-1}PA^{T}
				\\= &A\left(I+PG_{0}\right)^{-1}PA^{T} - A\left(I+PG_{0}\right)^{-1}PA^{T} \\&+A\left(I+PG_{0}\right)^{-1}PG_{1}\left(I + (I+PG_{0})^{-1}PG_{1}\right)^{-1}\\ &\times\left(I+G_{0}P\right)^{-1}A^{T}
				\\=&A\left(I+PG_{0}\right)^{-1}PG_{1}\left(I + (I+PG_{0})^{-1}PG_{1}\right)^{-1}\\ &\times\left(I+G_{0}P\right)^{-1}A^{T}=0.
			\end{aligned}
		\end{equation}
		By using the fact $A$ is invertible and $P>0$, $G1\neq 0$, the equation \eqref{eq:transdiff} can not be true, which leads to a contradiction. Therefore if $G_{1}\neq 0$,  $tr(\bar{P}) >tr(P)$ holds.
		
		\textbf{Necessity:} if $tr(\bar{P}) >tr(P)$, and $G_{1}=0$. Then, equations \eqref{eq:DARE11} and \eqref{eq:DARE12} are the same. Since the DARE \eqref{eq:DARE11} has a unique positive-definite solution, one has $\bar{P} = P$, which leads to a contradiction. Therefore, if  $tr(\bar{P}) >tr(P)$, $G_{1}\neq 0$ holds.	
	\end{proof}

\begin{remark}
	 The proof of result (1) in Theorem \ref{thm:1} establishes the relationship between the steady-state priori error covariance matrices $\bar{P}$ and $P$ by transforming the difference $\bar{P} - P$ into a solution of the Lyapunov equation \eqref{eq:LyaFg0}. This transformation simplifies the analysis and facilitates the proof of result (2) in Theorem \ref{thm:1}. By expressing the difference $\bar{P} - P$ as a solution of the Lyapunov equation \eqref{eq:LyaFg0}, it becomes possible to analyze the properties of the difference matrix and establish the relationships between $\bar{P}$ and $P$. Furthermore, these relationships can directly translate into the relationships between the traces of $\bar{P}$ and $P$.
\end{remark}

Theorem \ref{thm:1} shows that by adding redundant sensors, the sensor network can achieve lower MSE, and  $\bar{P}\geq P$.  However, one may care more about when the strict inequality, i.e., $\bar{P}> P$ holds. The following example illustrates the reason.

\textbf{Example}: Consider three Gaussian random variables $n_{1} \sim N(0,\Sigma_{1})$, $n_{2} \sim N(0,\Sigma_{2})$ and $n_{3} \sim N(0,\Sigma_{3})$ with
\begin{equation}\label{eq:covexample}
\Sigma_{1} = \left[\begin{array}{cc}
1&0\\0&1
\end{array}\right],
\Sigma_{2} = \left[\begin{array}{cc}
1&0\\0&0.5
\end{array}\right],
\Sigma_{3} = \left[\begin{array}{cc}
0.8&0\\0&0.8
\end{array}\right].
\end{equation}
Then, the variables $n_{1}$, $n_{2}$, $n_{3}$ and the corresponding uncertainty ellipsoid are  plotted in \figurename~\ref{fig:plotn}.

\begin{figure}[htbp]
	\centering
	\includegraphics[width=0.7\linewidth]{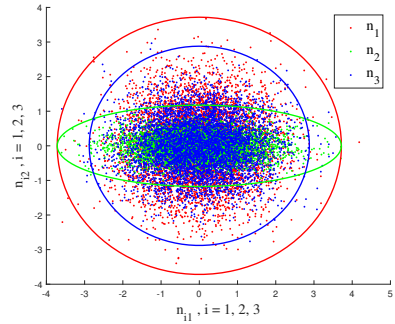}
	\caption{The distributions and corresponding error ellipsoids of $n_{i}$.}
	\label{fig:plotn}
\end{figure}

From \eqref{eq:covexample} and Fig. \ref{fig:plotn}, it is obvious that $\Sigma_{1}\geqq\Sigma_{2}$, $\Sigma_{1}>\Sigma_{3}$ and $tr(\Sigma_{2})<tr(\Sigma_{3})$. However, if one focuses on the first component of the variable, i.e., $n_{i1}$, the variances the $n_{11}$ and $n_{21}$ are equal, and the variance of $n_{31}$ is smaller than that of $n_{11}$. That is, the strict inequality guarantees that the variances of all the components in the random variable are getting smaller, \textcolor{black}{an overall improvement of the estimation performance.} Therefore, it is important to study when $\bar{P}>P$ holds. The following theorem gives the answer.

\begin{theorem}\label{thm:2}
	If Assumptions \ref{assum:inver}, \ref{assum:obs} and \ref{assum:3} hold, and $\bar{P}$ and $P$ are solutions of the DAREs \eqref{eq:DARE11} and \eqref{eq:DARE12} respectively, one has that $\bar{P}> P$ if and only if the corresponding symplectic matrices $\bar{S}$ and $S$ do not have any common stable eigenvalues and associated common eigenvectors simultaneously.
\end{theorem}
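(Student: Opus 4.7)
Since Theorem \ref{thm:1} already gives $\bar{P} \geq P$, setting $\Delta := \bar{P} - P$, the task reduces to characterizing $\operatorname{Ker}(\Delta) = \{0\}$ in terms of the spectral data of $\bar{S}$ and $S$ furnished by Lemma \ref{lemma:simplec}. The \emph{if} direction is the easy contrapositive: any common stable eigenvector $v = \left[x^{T}~y^{T}\right]^{T}$ of $\bar{S}$ and $S$ associated with the same eigenvalue lies in $\alpha_{-}(\bar{S}) \cap \alpha_{-}(S)$, which by Lemma \ref{lemma:simplec} are parametrized as $\{[z^{T}~(\bar{P}z)^{T}]^{T}\}$ and $\{[z^{T}~(Pz)^{T}]^{T}\}$ respectively; invertibility of $\bar{X}$ and $X$ guarantees $x \neq 0$, and matching the second blocks forces $\bar{P}x = y = Px$, so $\Delta x = 0$ with $x \neq 0$, contradicting $\bar{P} > P$.

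For the \emph{only if} direction, I plan to assume $\mathcal{N} := \operatorname{Ker}(\Delta) \neq \{0\}$ and exhibit a common stable eigenvector of $\bar{S}$ and $S$ with a common eigenvalue. The key construction is the subspace $W = \left\{\left[x^{T}~(\bar{P}x)^{T}\right]^{T} : x \in \mathcal{N}\right\} \subseteq \alpha_{-}(\bar{S}) \cap \alpha_{-}(S)$, on which I will prove that both $\bar{S}(W) \subseteq W$ and $S(W) \subseteq W$ hold with the restrictions coinciding, $\bar{S}|_{W} = S|_{W}$. Once this is established, any eigenvector of the common restriction is automatically a common eigenvector of $\bar{S}$ and $S$ with a common eigenvalue; moreover $W \subseteq \alpha_{-}(\bar{S})$ forces this eigenvalue to be stable, completing the argument.

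The technical core breaks into three linked steps. First, rewrite the Lyapunov equation \eqref{eq:LyaFg0} as $\Delta - A_{G_{0}}\Delta A_{G_{0}}^{T} = F_{1} + F_{2}$, where $F_{1} = A_{G_{0}}\Delta G_{0}(I + PG_{0})^{-1}\Delta A_{G_{0}}^{T} \geq 0$ and $F_{2} = A(I+PG_{0})^{-1}PA^{T} - A(I+PG)^{-1}PA^{T} \geq 0$; Schur stability of $A_{G_{0}}$ together with $x^{T}\Delta x = 0$ on $\mathcal{N}$ immediately give $A_{G_{0}}^{T}\mathcal{N} \subseteq \mathcal{N}$ as well as $F_{1}x = F_{2}x = 0$ for $x \in \mathcal{N}$. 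Second, Lemma \ref{lemma:matrix_inv} yields $F_{2} = A(I+PG_{0})^{-1}PG_{1}(I+PG)^{-1}PA^{T}$, so invertibility of $A$, $P$, and $(I+PG_{0})$ distills $F_{2}x = 0$ into $G_{1}P(I+GP)^{-1}A^{T}x = 0$ on $\mathcal{N}$. Third, the DAREs \eqref{eq:DARE11} and \eqref{eq:DARE12} give $A^{-1}(Q - \bar{P}) = -(I+\bar{P}G_{0})^{-1}\bar{P}A^{T}$ and $A^{-1}(Q - P) = -(I+PG)^{-1}PA^{T}$, whence a direct block computation yields $\bar{S}\left[x^{T}~(\bar{P}x)^{T}\right]^{T} = \left[(A_{G_{0}}^{T}x)^{T}~(\bar{P}A_{G_{0}}^{T}x)^{T}\right]^{T}$ and $S\left[x^{T}~(Px)^{T}\right]^{T} = \left[(A_{G}^{T}x)^{T}~(PA_{G}^{T}x)^{T}\right]^{T}$ with $A_{G} := A(I+PG)^{-1}$; subtracting the first blocks and using $\Delta x = 0$ collapses the difference to exactly $G_{1}P(I+GP)^{-1}A^{T}x$, which vanishes by step two, and matching second blocks then follows because the common first block lies in $\mathcal{N}$. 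The main obstacle will be this last algebraic matching: it requires that the ``forcing'' term $F_{2}$ in the Lyapunov equation be precisely the quantity measuring the discrepancy between $\bar{S}$ and $S$ on $W$. This correspondence is not coincidental but reflects the underlying symplectic structure, and carrying out the bookkeeping cleanly is the crux of the proof.
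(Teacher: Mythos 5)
Your proposal is correct and follows essentially the same route as the paper's proof: both directions rest on the Lyapunov equation \eqref{eq:LyaFg0} from Theorem \ref{thm:1}, the $A_{G_{0}}^{T}$-invariance of $\operatorname{Ker}(\bar{P}-P)$ and the vanishing of the $G_{1}$-forcing term on that kernel, and the lifting to $\bar{S}$ and $S$ via the graph identities \eqref{eq:simp1} and its analogue for $S$ together with Lemma \ref{lemma:simplec}. The only difference is organizational: the paper fixes an eigenvector of $A_{G_{0}}^{T}$ restricted to the kernel at the outset and tracks its eigenvalue $\lambda$, whereas you show $\bar{S}$ and $S$ coincide on the whole subspace $W$ and extract the common stable eigenpair at the end — a reordering of the same argument rather than a genuinely different proof.
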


	\begin{proof}
		In Theorem \ref{thm:1}, it has been proved that $\bar{P}\geq P$. One has that if $Ker(\bar{P}-P)= {0}$, $\bar{P}>P$. Therefore, proving Theorem \ref{thm:2} is equivalent to prove that $Ker(\bar{P}-P)= {0}$ if and only if $\bar{S}$ and $S$ do not have any common stable eigenvalues and associated common eigenvector simultaneously.

		\textbf{Sufficiency:} If $Ker(\bar{P}-P)\neq {0}$, suppose that $(\bar{P}-P)x = 0$. Considering  \eqref{eq:barPP}, and multiplying left and right sides of \eqref{eq:LyaFg0} by $x^{T}$ and $x$ respectively, one has 
		\begin{equation}\label{eq:Fg0inva}
			x^{T}A_{G_{0}}\left(\bar{P}-P\right)A_{G_{0}}^{T}x = 0,
		\end{equation}
		\begin{equation}\label{eq:diffxt}
			x^{T}A\left(I+PG_{0}\right)^{-1}PA^{T}x -x^{T}	A\left(I+PG\right)^{-1}PA^{T}x = 0.
		\end{equation}
		Equation \eqref{eq:Fg0inva} means $Ker\left(\bar{P}-P\right)$ is $A_{G_{0}}^{T}$ invariant, and there exists an left eigenvalue of $A_{G_{0}}$, $\lambda\neq 0$, and its corresponding eigenvector is $x^{T}$ satisfying $x \in Ker(\bar{P}-P)$, which means $A_{G_{0}}^{T}x = \lambda x$. 
		
		From \eqref{eq:transdiff} and \eqref{eq:diffxt}, one has 
		\begin{equation}\label{eq:transdiffx}
			\begin{aligned}
				&x^{T}A\left(I+PG_{0}\right)^{-1}PG_{1}\left(I + (I+PG_{0})^{-1}PG_{1}\right)^{-1}\\ &\times\left(I+G_{0}P\right)^{-1}A^{T}x=0. 
			\end{aligned}
		\end{equation}
		
		Note that 
		\begin{equation}\label{eq:xGP}
			\begin{aligned}
				\left(I+G_{0}P\right)^{-1}A^{T}x =& \left(I+G_{0}P\right)^{-1}\left(I+G_{0}\bar{P}\right)A_{G_{0}}^{T}x\\
				=& \lambda\left(I+G_{0}P\right)^{-1}\left(I+G_{0}\bar{P}\right)x\\
				=&\lambda\left(I+G_{0}P\right)^{-1}\left(I+G_{0}P\right)x = \lambda x.
			\end{aligned}
		\end{equation}
		From \eqref{eq:transdiffx} and \eqref{eq:xGP}, one can get that 
		\begin{equation}\label{eq:PG1x}
			G_{1}Px = 0.
		\end{equation}
		Let $A_{G} = A\left(I+PG\right)^{-1}$, the following equation holds
		\begin{equation}
			\begin{aligned}
				A_{G}^{T}x &= \left(I+GP\right)^{-1}\left(I+G_{0}\bar{P}\right)A_{G_{0}}^{T}x\\
				&=\lambda \left(I+GP\right)^{-1}\left(\left(I+G_{0}\bar{P}\right)x +	G_{1}Px\right)\\
				&=\lambda x,
			\end{aligned}
		\end{equation}
		which means $A^{T}_{G_{0}}$ and $A^{T}_{G}$ have a common stable eigenvalue and corresponding eigenvector, respectively.
		
		Multiplying $\left(I,\bar{P}\right)^{T}$ to the right side of $\bar{S}$ gives 
		\begin{equation}\label{eq:simplecmulti}
			\begin{aligned}
				\bar{S}\left[\begin{array}{c}
					I\\\bar{P}
				\end{array}\right] &= \left[\begin{array}{cc}
					A^{T}+G_{0}A^{-1}Q & -G_{0}A^{-1}\\
					-A^{-1}Q&A^{-1}
				\end{array}\right]\left[\begin{array}{c}
					I\\\bar{P}
				\end{array}\right] 
				\\&=  \left[\begin{array}{c}
					A^{T}+G_{0}A^{-1}(Q-\bar{P})\\
					-A^{-1}(Q-\bar{P})
				\end{array}\right],
			\end{aligned}
		\end{equation}
		considering the Riccati equation \eqref{eq:DARE11}, one has $(Q-\bar{P}) = -A(I+\bar{P}G_{0})^{-1}\bar{P}A^{T}$. Then,
		\begin{equation}\label{eq:simplecmulticompon}
			\begin{aligned}
				\left[\begin{array}{c}
					A^{T}+G_{0}A^{-1}(Q-\bar{P})\\
					-A^{-1}(Q-\bar{P})
				\end{array}\right]&= \left[\begin{array}{c}
					A^{T}-G_{0}(I+\bar{P}G_{0})^{-1}\bar{P}A^{T}\\
					(I+\bar{P}G_{0})^{-1}\bar{P}A^{T}
				\end{array}\right]\\
				&= \left[\begin{array}{c}
					(I+G_{0}\bar{P})^{-1}A^{T}\\
					\bar{P}(I+G_{0}\bar{P})^{-1}A^{T}
				\end{array}\right]
			\end{aligned}
		\end{equation}
		combining \eqref{eq:simplecmulti} and \eqref{eq:simplecmulticompon}, and noting that $A_{G_{0}} = A(I+\bar{P}G_{0})^{-1}$, it can be obtained that 
		\begin{equation}
			\label{eq:simp1}
			\bar{S}\left[\begin{array}{c}
				I\\\bar{P}
			\end{array}\right] =\left[\begin{array}{c}
				I\\\bar{P}
			\end{array}\right] A_{G_{0}}^{T}.
		\end{equation}
		Similarly, one has
		\begin{equation}
			\begin{aligned}
				S\left[\begin{array}{c}
					I\\{P}
				\end{array}\right] = \left[\begin{array}{c}
					I\\{P}
				\end{array}\right]A_{G}^{T}.
			\end{aligned}
		\end{equation}
		Multiplying the right-hand sides of $x$ and letting $y = \bar{P}x = Px$ yield
		
		\begin{equation}
			\bar{S}\left[\begin{array}{c}
				x\\y
			\end{array}\right] = \lambda\left[\begin{array}{c}
				x\\y
			\end{array}\right],
		\end{equation}
		\begin{equation}
			S\left[\begin{array}{c}
				x\\y
			\end{array}\right] = \lambda\left[\begin{array}{c}
				x\\y
			\end{array}\right].
		\end{equation} 
		Since $\lambda$ is a common stable eigenvalue of $A^{T}_{G_{0}}$ and $A^{T}_{G}$, it can be obtained that $\lambda$ is a common stable eigenvalue of $\bar{S}$ and $S$, and $\left(x^{T},y^{T}\right)^{T}$ is the corresponding eigenvector. 
		
		Therefore, $Ker(\bar{P}-P)= {0}$ if $\bar{S}$ and $S$ do not have any common stable eigenvalues and associated  common  eigenvectors simultaneously.
		
		\textbf{Necessity:} If $\bar{S}$ and $S$ have a common stable eigenvalue $\lambda$ and the associated common eigenvector $\left(x^{T},y^{T}\right)^{T}$ consistent with the blocks of the symplectic matrices.  Since Assumptions \ref{assum:inver} and \ref{assum:obs} hold, from Riccati theory, $\bar{S}$ and $S$ have no eigenvalues on the unit circle, and all the eigenvalues are symmetric about the unit circle. Assume the upper Jordan blocks associated with the stable eigenvalues of $\bar{S}$ and $S$ are $\bar\Lambda$ and $\Lambda$, and the matrices composed of the corresponding eigenvectors and generalized eigenvectors are $\left(\bar{X}^{T},\bar{Y}^{T}\right)^{T}$ and $\left(X^{T},Y^{T}\right)^{T}$, respectively.  Without loss of generality, suppose that the first columns of $\left(\bar{X}^{T},\bar{Y}^{T}\right)^{T}$ and $\left(X^{T},Y^{T}\right)^{T}$ are the same, which is $\left(x^{T},y^{T}\right)^{T}$. Then, one has
		
		\begin{equation}\label{eq:simplec}
			\bar{S}\left[\begin{array}{c}
				\bar{X}\\\bar{Y}
			\end{array}\right] = \left[\begin{array}{c}
				\bar{X}\\\bar{Y}
			\end{array}\right]\bar{\Lambda}, S \left[\begin{array}{c}
				X\\Y
			\end{array}\right] = \left[\begin{array}{c}
				X\\Y
			\end{array}\right]\Lambda.
		\end{equation}
		By using symplectic matrix theory (Lemma \ref{lemma:simplec}) and the Assumptions \ref{assum:inver} and \ref{assum:obs}, one has that 
		\begin{equation}
			\bar{P} = \bar{Y}\bar{X}^{-1}, P = YX^{-1}.
		\end{equation}
		Right multiplying $\bar{P}$ and $P$ by $x$ and considering $x$ is the first column of $\bar{X}$ and $X$, $y$ is the first column of $\bar{Y}$ and $Y$, one has 
		\begin{equation}
			\bar{P}x =  \bar{Y}\bar{X}^{-1}x = y,
		\end{equation}
		\begin{equation}
			Px =  YX^{-1}x = y.
		\end{equation}
		It can be obtained that $\left(\bar{P}-P\right)x=0$, which means  $Ker(\bar{P}-P)\neq {0}$. Therefore, $Ker(\bar{P}-P)= {0}$ only if $\bar{S}$ and $S$ do not have any common stable eigenvalues and associated common eigenvectors simultaneously.
	\end{proof}

\begin{remark}
 Theorem \ref{thm:2} provides a fundamental condition to determine when the addition of redundant sensors leads to an improvement in the estimation accuracy of all elements of the system state $x_k$ simultaneously. The condition $\bar{P} > P$ indicates that the steady-state priori error covariance matrix $\bar{P}$ is strictly greater than the priori error covariance matrix $P$. This implies that the estimation error variances of all elements in the system state decrease when redundant sensors are added. By satisfying the condition $\bar{P} > P$, the estimation performance of each element of the system state can be enhanced, leading to an overall improvement in the estimation accuracy of the entire system state. This result is significant as it provides a criterion for determining whether the addition of redundant sensors can lead to a simultaneous improvement in the estimation accuracy of all channels.
\end{remark}
\begin{corollary}\label{coro:1}
	If Assumptions \ref{assum:inver}, \ref{assum:obs} and \ref{assum:3} hold, and $\bar{P}$ and $P$ are solutions of the DAREs \eqref{eq:DARE11} and \eqref{eq:DARE12} respectively, one has that $\bar{P}> P$ if and only if the corresponding symplectic matrices $\bar{S}$ and $S$ do not have any common unstable left eigenvalues and associated common left eigenvectors simultaneously.
\end{corollary}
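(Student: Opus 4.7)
The plan is to reduce this corollary directly to Theorem 2 by using the symplectic structure (Lemma 4 and Definition 1) to set up a canonical bijection between common stable right eigenpairs of $\bar{S},S$ and common unstable left eigenpairs of $\bar{S},S$. Since both $\bar{S}$ and $S$ are symplectic with respect to the same matrix $J$, this bijection will transport the condition in Theorem 2 to exactly the condition stated here.

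First, I would use the defining symplectic identity $J^{-1} M^H J = M^{-1}$, rewritten as $M^H = J M^{-1} J^{-1}$. Starting from a right eigenpair $M v = \lambda v$ with $\lambda \neq 0$ (which is guaranteed because $\bar{S},S \in \operatorname{dom}(Ric)$ have no zero or unit-circle eigenvalues), I would apply $J$ to both sides of $M^{-1} v = \lambda^{-1} v$ to get $M^H (J v) = \lambda^{-1} (J v)$, and then take conjugate transpose to obtain
\begin{equation}
(J v)^H M = (\lambda^\star)^{-1} (J v)^H.
\end{equation}
Thus $(Jv)^H$ is a left eigenvector of $M$ with eigenvalue $(\lambda^\star)^{-1}$. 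Because $|\lambda|<1$ iff $|(\lambda^\star)^{-1}|>1$, the map
\begin{equation}
(\lambda, v) \;\longmapsto\; \bigl((\lambda^\star)^{-1},\ (Jv)^H\bigr)
\end{equation}
is a bijection between stable right eigenpairs and unstable left eigenpairs of any symplectic matrix in $\operatorname{dom}(Ric)$. The inverse map is $(\mu, w^H) \mapsto ((\mu^\star)^{-1}, J^{-1} w)$.

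Next I would observe the crucial fact that this bijection depends only on $J$, which is identical for $\bar{S}$ and $S$ (they have the same dimension $2n$). Consequently, $\bar{S}$ and $S$ share a common stable eigenvalue $\lambda$ and a common right eigenvector $v$ (i.e. $\bar{S}v = Sv = \lambda v$) \emph{if and only if} they share the common unstable eigenvalue $(\lambda^\star)^{-1}$ and the common left eigenvector $(Jv)^H$. Combining this equivalence with Theorem 2 yields the corollary.

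The main obstacle, and the only item requiring care, will be keeping track of the common-ness of the eigenvector across the two matrices simultaneously under conjugation; this is the step where one must ensure that the transformation $v \mapsto (Jv)^H$ is applied once (with the same $J$) rather than separately for $\bar{S}$ and $S$. Beyond that bookkeeping, the argument is a direct algebraic consequence of the symplectic identity, and no Riccati or Lyapunov manipulation is needed on top of what Theorem 2 already supplies.
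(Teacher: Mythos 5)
Your proposal is correct and follows essentially the same route as the paper, whose proof of this corollary is a one-line appeal to Definition~\ref{def:1}, Lemma~\ref{lemma:symmetric} and Theorem~\ref{thm:2}; you simply make explicit the eigenpair correspondence $(\lambda,v)\mapsto\bigl((\lambda^{\star})^{-1},(Jv)^{H}\bigr)$ that the symplectic identity provides. Your observation that the same $J$ serves both $\bar{S}$ and $S$, so that ``common'' eigenpairs map to ``common'' eigenpairs, is exactly the bookkeeping the paper leaves implicit.
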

\begin{proof}
	From the definition of symplectic matrix (Definition \ref{def:1}) and Lemma \ref{lemma:symmetric} and Theorem \ref{thm:2}, Corollary \ref{coro:1} can be easily proved.
\end{proof}

Indeed, the analysis and results presented so far have provided insights and answers to Problem \ref{prob:1}, which aims to understand the effects of redundant sensors on the collaborative state estimation performance of a sensor network. Theorems \ref{thm:1} and \ref{thm:2} establish the relationship between the error covariance matrices and provide conditions under which the estimation performance can be enhanced by adding redundant sensors. These results contribute to the understanding of the effects of redundant sensors and provide theoretical insights into the optimization of sensor network design. Therefore, with the analysis conducted so far, significant progress has been made in addressing Problem \ref{prob:1} and gaining insights into the effects of redundant sensors on estimation performance.

\section{The Optimal Design of the Redundant Sensors}
The design of the sensors influences the sensors' performance, as it is proved that the redundant sensors can make the estimate MSE smaller, one may consider the problem that given $N$ sensors, sensor $i$, $i = 1,\ldots,N$, how to design the $m$ redundant sensors, sensor $i$, $i = N+1,\ldots, N+m$, to achieve a better performance. Before proposing the algorithm on how to optimally design the redundant sensors, the following useful lemmas are introduced.

\begin{lemma}\label{lemma:RicIneq}
	If Assumptions \ref{assum:inver} and \ref{assum:obs} hold, and $P$ is solution of the DARE \eqref{eq:DARE12}, for any solution $P^{i}$ to the following discrete-time algebraic Riccati inequality
	\begin{equation}\label{eq:DARI}
	A\left(I+P^{i}G\right)^{-1}P^{i}A^{T}-P^{i}+Q\leq 0,
	\end{equation}
	it follows that 
	\begin{equation}
	P\leq P^{i}.
	\end{equation}
\end{lemma}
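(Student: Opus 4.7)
The plan is to recast the DARE \eqref{eq:DARE12} and DARI \eqref{eq:DARI} through the composite Riccati operator $\Phi(X) \triangleq A(I+XG)^{-1}XA^T + Q$, which by the Woodbury identity (Lemma \ref{lemma:matrix_inv}) coincides with $h \circ g_{C,R}(X)$, i.e., one full step of the priori covariance recursion in \eqref{eq:kalman_filter}. Under this reformulation, the DARE reads $\Phi(P) = P$ and the DARI becomes $\Phi(P^i) \leq P^i$. The core idea is to iterate $\Phi$ starting from $P^i$, establish that the iterates descend monotonically, and identify their limit as the unique fixed point $P$, thereby forcing $P \leq P^i$.

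The first step is to show that $\Phi$ is order-preserving on the cone of positive semidefinite matrices: $X \geq Y \geq 0$ implies $\Phi(X) \geq \Phi(Y)$. The cleanest route uses the classical minimization representation $g_{C,R}(X) = \min_{K}\bigl\{(I-KC)X(I-KC)^T + KRK^T\bigr\}$, valid because $R>0$. For every gain $K$, the inequality $X \geq Y$ gives $(I-KC)X(I-KC)^T + KRK^T \geq (I-KC)Y(I-KC)^T + KRK^T \geq g_{C,R}(Y)$; taking the infimum over $K$ on the left yields $g_{C,R}(X) \geq g_{C,R}(Y)$, and composing with the manifestly monotone Lyapunov operator $h$ gives $\Phi(X) \geq \Phi(Y)$. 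Crucially, this argument does not require $X$ or $Y$ to be invertible, which matters because $P^i$ is only assumed to satisfy the DARI.

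The second step is the iteration itself. Set $P^{(0)} = P^i$ and $P^{(k+1)} = \Phi(P^{(k)})$. The DARI gives $P^{(1)} \leq P^{(0)}$, and monotonicity of $\Phi$ lifts this via a one-line induction to $P^{(k+1)} \leq P^{(k)}$ for every $k \geq 0$. The non-increasing PSD-valued sequence $\{P^{(k)}\}$ is bounded below by $0$, so it converges to a limit $P^{(\infty)} \geq 0$ which, by continuity of $\Phi$, is a PSD fixed point of $\Phi$. Under Assumptions \ref{assum:inver} and \ref{assum:obs} together with controllability of $(A,\sqrt{Q})$, the DARE admits a unique PSD fixed point, namely $P$ (as invoked in Lemma \ref{lemma:simplec}); hence $P^{(\infty)} = P$, and passing to the limit in $P^{(k)} \leq P^i$ delivers $P \leq P^i$.

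The main obstacle is establishing monotonicity of $\Phi$ without assuming invertibility of $P^i$. The alternative Woodbury form $\Phi(X) = A(X^{-1}+G)^{-1}A^T + Q$ would make monotonicity immediate through the identity $(X^{-1}+G)^{-1}-(Y^{-1}+G)^{-1}=(X^{-1}+G)^{-1}Y^{-1}(X-Y)X^{-1}(Y^{-1}+G)^{-1}$ but requires $X > 0$, whereas the minimization characterization of $g_{C,R}$ bypasses this restriction and guarantees the order-preserving property globally on the PSD cone. After that, monotone convergence of matrix sequences, continuity of $\Phi$, and the uniqueness statement for the DARE solution—all already supported by the Riccati-theoretic machinery assembled in Section 2—complete the argument with no further difficulty.
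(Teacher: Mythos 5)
The paper itself gives no proof of Lemma \ref{lemma:RicIneq}: it is stated and then used directly (via the Schur-complement reformulation that follows it), so it is treated as a known comparison result from Riccati theory, and there is no in-paper argument to compare yours against. Your blind proof is essentially correct and follows the standard route for such comparison lemmas: writing \eqref{eq:DARE12} and \eqref{eq:DARI} as $\Phi(P)=P$ and $\Phi(P^{i})\leq P^{i}$ for the one-step priori map $\Phi=h\circ g_{C,R}$ (the Woodbury identification is valid for every positive semidefinite argument, since $I+XG$ then has all eigenvalues at least $1$), proving order preservation of $g_{C,R}$ through the gain-minimization representation $g_{C,R}(X)=\min_{K}\{(I-KC)X(I-KC)^{T}+KRK^{T}\}$ (legitimate because $R>0$), and then running the monotone fixed-point iteration from $P^{(0)}=P^{i}$. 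This is a clean, self-contained argument that the paper could have included.

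One step should be tightened. Your limit $P^{(\infty)}$ is a priori only positive semidefinite, while Lemma \ref{lemma:simplec} asserts uniqueness of the DARE solution only within the class of positive definite matrices, so the identification $P^{(\infty)}=P$ does not follow verbatim from what Section 2 provides. The gap is easy to close: since $g_{C,R}(Y)=Y^{1/2}(I+Y^{1/2}GY^{1/2})^{-1}Y^{1/2}\geq c_{Y}\,Y$ with $c_{Y}=1/(1+\lambda_{\max}(Y^{1/2}GY^{1/2}))>0$, iterating the fixed-point relation gives $P^{(\infty)}=\Phi^{n}(P^{(\infty)})\geq Q+c_{1}AQA^{T}+\cdots+c_{n-1}A^{n-1}Q(A^{n-1})^{T}>0$ by controllability of $(A,\sqrt{Q})$, so the limit is in fact positive definite and Lemma \ref{lemma:simplec} applies; alternatively, invoke the standard result that under stabilizability of $(A,\sqrt{Q})$ and detectability of $(A,C)$ the filtering DARE has a unique positive semidefinite solution. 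Finally, your iteration implicitly requires $P^{i}\geq 0$ so that all iterates remain in the PSD cone; this is consistent with how the lemma is used later in the paper, where $P^{i}=X^{-1}>0$, but it is worth stating explicitly.
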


Rewrite  \eqref{eq:DARI} as in following form by invoking Lemma \ref{lemma:Schur}
\begin{equation}\label{eq:MDARI}
\left[\begin{array}{cc}
-P^{i}+Q&AP^{i}\\P^{i}A^{T}&-\left(P^{i}+P^{i}GP^{i}\right)
\end{array}\right]\leq 0.
\end{equation}
By multiplying $(P^{i})^{-1}$ to the right and left side of all blocks in \eqref{eq:MDARI}, one has \eqref{eq:MDARI} is equivalent to
\begin{equation}\label{eq:LMItrans}
\left[\begin{array}{cc}
-X+XQX& XA\\A^{T}X& -X-G
\end{array}\right]\leq 0 ,
\end{equation}
where $X = (P^{i})^{-1}$. 

From the mathematical operations described above, it can be observed that the Riccati inequality \eqref{eq:DARI} is equivalent to \eqref{eq:LMItrans}. To facilitate the development of the optimal design algorithm, an auxiliary matrix $C_{r}$ is introduced through the following lemma:

\begin{lemma}\label{lemma:ineqTrans}
	The inequality \eqref{eq:LMItrans} holds if and only if there exists a \textcolor{black}{$C_{r} \in \mathbb{R}^{\tilde{m}\times n}$} such that 
	\begin{equation}\label{eq:ineqtrans}
	\left[\begin{array}{cc}
	-X+XQX& XA\\A^{T}X& \Omega
	\end{array}
	\right]\leq 0,
	\end{equation}
	where \begin{equation}\label{eq:Omega}
	\Omega = -\left(X+G_{0}\right) - \tilde{C}^{T}\tilde{R}^{-1}\tilde{C}+\left(\tilde{C}-C_{r}\right)^{T}\tilde{R}^{-1}\left(\tilde{C}-C_{r}\right).
	\end{equation}
\end{lemma}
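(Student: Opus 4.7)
The plan is essentially a one-line algebraic observation followed by a L\"owner-ordering argument. From the definition of $\Omega$ in \eqref{eq:Omega} together with $G = G_{0}+\tilde{C}^{T}\tilde{R}^{-1}\tilde{C}$, I would first record the compact identity
\begin{equation}
\Omega = -(X+G) + (\tilde{C}-C_{r})^{T}\tilde{R}^{-1}(\tilde{C}-C_{r}),
\end{equation}
which is immediate after cancelling the $\pm\tilde{C}^{T}\tilde{R}^{-1}\tilde{C}$ terms inside $\Omega$. Since $\tilde{R}>0$, the correction $(\tilde{C}-C_{r})^{T}\tilde{R}^{-1}(\tilde{C}-C_{r})$ is positive semidefinite for every $C_{r}$, so the $(2,2)$ block of the matrix in \eqref{eq:ineqtrans} always dominates the $(2,2)$ block of \eqref{eq:LMItrans}, while the $(1,1)$, $(1,2)$, and $(2,1)$ blocks of the two LMIs are identical.

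For the ``only if'' direction I would simply exhibit the witness $C_{r}=\tilde{C}$, which annihilates the correction term and makes the matrix in \eqref{eq:ineqtrans} coincide literally with the matrix in \eqref{eq:LMItrans}; thus \eqref{eq:LMItrans} immediately supplies a feasible $C_{r}$ for \eqref{eq:ineqtrans}. For the converse, I would subtract the two matrices: the difference has zeros in three blocks and the PSD block $(\tilde{C}-C_{r})^{T}\tilde{R}^{-1}(\tilde{C}-C_{r})$ in the $(2,2)$ position, hence is globally positive semidefinite. Therefore the matrix in \eqref{eq:ineqtrans} dominates the matrix in \eqref{eq:LMItrans} in the L\"owner order, so the assumed $\le 0$ inequality in \eqref{eq:ineqtrans} forces the corresponding inequality in \eqref{eq:LMItrans}.

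There is no genuine technical obstacle beyond careful bookkeeping of the quadratic expansion, and only the positivity $\tilde{R}>0$ is used in any essential way. The substantive content of the lemma is not in the proof but in the motivation that must be flagged afterwards: whereas \eqref{eq:LMItrans} depends on $\tilde{C}$ nonlinearly through the quadratic $\tilde{C}^{T}\tilde{R}^{-1}\tilde{C}$ buried inside $G$, the reformulation \eqref{eq:ineqtrans} becomes affine in $\tilde{C}$ once $C_{r}$ is held fixed. This convex-in-$\tilde{C}$ structure is precisely what enables the subsequent semi-definite programming iteration for the redundant-sensor design problem of Problem~\ref{prob:2}.
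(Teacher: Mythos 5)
Your proposal is correct, and for the ``only if'' direction it coincides with the paper's: both exhibit the witness $C_{r}=\tilde{C}$, which kills the correction term and makes \eqref{eq:ineqtrans} literally reduce to \eqref{eq:LMItrans}. For the other direction, however, you take a genuinely different and in fact cleaner route. The paper proves it through the Schur complement (Lemma \ref{lemma:Schur}): it unpacks \eqref{eq:ineqtrans} into the two conditions $-X+XQX\leq 0$ and $\Omega-A^{T}X(-X+XQX)^{-1}XA\leq 0$, then uses $\Omega\leq -(X+G)$ up to the positive semidefinite term $\left(\tilde{C}-C_{r}\right)^{T}\tilde{R}^{-1}\left(\tilde{C}-C_{r}\right)$ to recover the Schur conditions of \eqref{eq:LMItrans} and reassemble the block inequality. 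You instead observe the identity $\Omega=-(X+G)+\left(\tilde{C}-C_{r}\right)^{T}\tilde{R}^{-1}\left(\tilde{C}-C_{r}\right)$ directly, so that the matrix in \eqref{eq:ineqtrans} equals the matrix in \eqref{eq:LMItrans} plus a block-diagonal positive semidefinite perturbation, and the conclusion follows from the L\"owner order without any Schur complement. Your argument is more elementary and sidesteps the invertibility of the $(1,1)$ block that the nonstrict Schur complement formally requires, at the price of not rehearsing the Schur machinery that the paper reuses immediately afterwards in Theorem \ref{thm:3}; substantively the two proofs carry the same content, and your closing remark about affinity in $\tilde{C}$ for fixed $C_{r}$ correctly identifies why the reformulation matters for Algorithm 1.
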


\begin{proof}
	\textbf{Sufficiency:} If there exists a matrix $C_{r}$ such that \eqref{eq:ineqtrans} holds. From Lemma \ref{lemma:Schur}, one has 
	\begin{equation}\label{eq:-x+xq<0}
	-X+XQX\leq 0,
	\end{equation}
	and 
	\begin{equation}
	\Omega-A^{T}X(-X+XQ)^{-1}XA\leq 0.
	\end{equation}
	By \eqref{eq:Omega}, it follows that 
	\begin{equation}
	-X-G-A^{T}X(-X+XQ)^{-1}XA\leq-\Upsilon,
	\end{equation}
	where $\Upsilon = \left(\bar{C}-C_{r}\right)^{T}\hat{R}^{-1}\left(\bar{C}-C_{r}\right)\geq0$. Then, one has 
	\begin{equation}\label{eq:-x-g<0}
	-X-G-A^{T}X(-X+XQ)^{-1}XA \leq 0.
	\end{equation}
	From \eqref{eq:-x+xq<0}, \eqref{eq:-x-g<0} and Lemma \ref{lemma:Schur}, one has \eqref{eq:LMItrans} holds.
	
	\textbf{Necessity:} If the inequality \eqref{eq:LMItrans} holds. Then, set $C_{r} =\hat{C}$,  one has that \eqref{eq:ineqtrans} holds.
\end{proof}

The following theorem is presented to guide the design optimization algorithm for redundant sensors. It introduces a parameter $\gamma$ to establish the relationship between the traces of covariance matrices and utilizes the auxiliary variable $C_r$ introduced in Lemma \ref{lemma:ineqTrans} to facilitate the iteration.

\begin{theorem}\label{thm:3}
	Suppose that Assumptions \ref{assum:inver}, \ref{assum:obs} and \ref{assum:3} hold, and $\bar{P}$ is the unique solution to the Riccati equation \eqref{eq:DARE1}. For a given matrix $C_{r}$, if there exists $X=X^{T}>0$ and $\tilde{C}$ such that
\textcolor{black}{	\begin{equation}F(X,\tilde{C}) = \label{eq:LMImain}
	\left[\begin{array}{cccc}
	-X& XA&X\sqrt{Q}&0\\
	A^{T}X&\Phi&0&C_{r}^{T}\\
	\sqrt{Q}X&0&-I&0\\
	0&C_{r}&0&-\tilde{R}
	\end{array}\right]\leq 0,
	\end{equation}}
	where $\Phi = -X-G_{0}-\tilde{C}^{T}\tilde{R}^{-1}C_{r}-C_{r}^{T}\tilde{R}^{-1}\tilde{C}$, and 
\textcolor{black}{	\begin{equation}\label{eq:LMItrace}
	\left[\begin{array}{cc}
	\gamma &\left[e^{T}_{1,n},\ldots, e^{T}_{n,n}\right]\\ \left[\begin{array}{c}
	e_{1,n}\\ \vdots\\e_{n,n}
	\end{array}\right]& I_{n} \otimes X
	\end{array}\right]\geq 0,
	\end{equation}}
	then, the traces of the covariances $\bar{P}$ and $P$ satisfy 
	\begin{equation}
	\operatorname{tr}(\bar{P}) - \operatorname{tr}(P) \geq \operatorname{tr}(\bar{P})-\gamma.
	\end{equation}
\end{theorem}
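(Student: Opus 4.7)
The plan is to reduce the two matrix inequalities to the single scalar bound $\operatorname{tr}(P)\leq \gamma$, at which point subtracting from $\operatorname{tr}(\bar{P})$ gives the claimed inequality immediately. The only nontrivial content is the equivalence between \eqref{eq:LMImain} and the Riccati inequality already handled in Lemmas \ref{lemma:RicIneq} and \ref{lemma:ineqTrans}.

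First I would collapse the $4\times 4$ block LMI \eqref{eq:LMImain} to a $2\times 2$ form by applying Schur complement (Lemma \ref{lemma:Schur}) to the bottom-right block $\operatorname{diag}(-I,-\tilde{R})$, which is strictly negative definite. Eliminating the third block-row and block-column contributes $X\sqrt{Q}(-I)^{-1}\sqrt{Q}X = -XQX$ to the $(1,1)$ entry, and eliminating the fourth contributes $C_r^{T}(-\tilde{R})^{-1}C_r = -C_r^{T}\tilde{R}^{-1}C_r$ to the $(2,2)$ entry. A direct expansion shows
\begin{equation}
\Phi + C_r^{T}\tilde{R}^{-1}C_r = -(X+G_0) - \tilde{C}^{T}\tilde{R}^{-1}\tilde{C} + (\tilde{C}-C_r)^{T}\tilde{R}^{-1}(\tilde{C}-C_r),
\end{equation}
which is precisely the matrix $\Omega$ appearing in Lemma \ref{lemma:ineqTrans}. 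Hence \eqref{eq:LMImain} is equivalent to \eqref{eq:ineqtrans}, and by Lemma \ref{lemma:ineqTrans} equivalent to the transformed Riccati inequality \eqref{eq:LMItrans} with $X=(P^{i})^{-1}$. Retracing the passage from \eqref{eq:LMItrans} back through \eqref{eq:MDARI} to \eqref{eq:DARI}, the matrix $P^{i}:=X^{-1}$ satisfies the discrete-time algebraic Riccati inequality \eqref{eq:DARI}, so Lemma \ref{lemma:RicIneq} yields $P\leq X^{-1}$ and therefore $\operatorname{tr}(P)\leq \operatorname{tr}(X^{-1})$.

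Second, I would translate the LMI \eqref{eq:LMItrace} into an upper bound on $\operatorname{tr}(X^{-1})$. Since $X>0$ implies $I_n\otimes X>0$, Schur complement applied to \eqref{eq:LMItrace} is equivalent to
\begin{equation}
\gamma \;\geq\; \bigl[e_{1,n}^{T},\,\ldots,\,e_{n,n}^{T}\bigr]\,(I_n\otimes X)^{-1}\,\bigl[e_{1,n}^{T},\,\ldots,\,e_{n,n}^{T}\bigr]^{T}.
\end{equation}
Using $(I_n\otimes X)^{-1}=I_n\otimes X^{-1}$ and the block-diagonal structure, the right-hand side collapses to $\sum_{i=1}^{n} e_{i,n}^{T} X^{-1} e_{i,n} = \operatorname{tr}(X^{-1})$. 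Combining this with the previous step produces $\operatorname{tr}(P)\leq \operatorname{tr}(X^{-1})\leq \gamma$, and subtracting from $\operatorname{tr}(\bar{P})$ delivers the conclusion.

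The main obstacle, to the extent there is one, is purely algebraic: verifying that the Schur elimination of the two negative-definite blocks in \eqref{eq:LMImain} recovers exactly the $\Omega$ of Lemma \ref{lemma:ineqTrans}. This is why the cross terms $-\tilde{C}^{T}\tilde{R}^{-1}C_r - C_r^{T}\tilde{R}^{-1}\tilde{C}$ are placed in $\Phi$ rather than the pure quadratic $\tilde{C}^{T}\tilde{R}^{-1}\tilde{C}$ — the completion of the square $(\tilde{C}-C_r)^{T}\tilde{R}^{-1}(\tilde{C}-C_r)$ is what makes \eqref{eq:LMImain} jointly linear in the decision variables $X$ and $\tilde{C}$ for a frozen auxiliary matrix $C_r$, which is exactly the feature that will be exploited in Algorithm 1. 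Everything else in the proof is bookkeeping with the already-established lemmas.
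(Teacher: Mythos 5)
Your proposal is correct and follows essentially the same route as the paper's own proof: Schur elimination of the $\operatorname{diag}(-I,-\tilde{R})$ block to show \eqref{eq:LMImain} is equivalent to \eqref{eq:ineqtrans} via the completion of squares $\Phi+C_r^{T}\tilde{R}^{-1}C_r=\Omega$, then Lemmas \ref{lemma:ineqTrans} and \ref{lemma:RicIneq} to get $P\leq X^{-1}$, and a second Schur complement on \eqref{eq:LMItrace} to read off $\gamma\geq\operatorname{tr}(X^{-1})$. Only note that the phrase ``contributes $-XQX$ to the $(1,1)$ entry'' should be understood as the term being subtracted in the Schur complement (yielding $-X+XQX$), which your subsequent identity already reflects.
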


	\begin{proof}
		By invoking Lemma \ref{lemma:Schur}, \eqref{eq:LMImain} is equivalent to 
		\begin{equation}\label{eq:LMImainSchur}
			\begin{aligned}
				\left[\begin{array}{cc}
					-X&XA\\A^{T}X&\Phi
				\end{array}\right]&-
				\left[\begin{array}{cc}
					X\sqrt{Q}&0\\0&C_{r}^{T}	
				\end{array}\right]
				\left[\begin{array}{cc}
					-I&0\\0&-\hat{R}
				\end{array}\right]^{-1}\\&\times\left[\begin{array}{cc}
					\sqrt{Q}X&0\\0&C_{r}	
				\end{array}\right]\\
				& = \left[\begin{array}{cc}
					-X+XQX& XA\\A^{T}X& \Phi+C_{r}^{T}\hat{R}^{-1}C_{r}
				\end{array}\right]\leq 0 .
			\end{aligned}
		\end{equation}
		Since 
		\begin{equation}\label{eq:form_square}
			\begin{aligned}
				\Phi+C_{r}^{T}\hat{R}^{-1}C_{r}&=	-\left(X+G_{0}\right) - \hat{C}^{T}\hat{R}^{-1}\hat{C}\\&+\left(\hat{C}-C_{r}\right)^{T}\hat{R}^{-1}\left(\hat{C}-C_{r}\right) = \Omega,
			\end{aligned}
		\end{equation}
		substitute \eqref{eq:form_square} into \eqref{eq:LMImainSchur},	it follows that \eqref{eq:LMImain} is equivalent to \eqref{eq:ineqtrans}.
		
		From Lemma  \ref{lemma:RicIneq} and \ref{lemma:ineqTrans}, one has that $P\leq X^{-1}$ and this means $\operatorname{tr}(P)\leq \operatorname{tr}(X^{-1})$.
		
		Then, by invoking Lemma \ref{lemma:Schur}, \eqref{eq:LMItrace} is equivalent to 
		\begin{equation}
			\gamma-\left[e^{T}_{1},\ldots, e^{T}_{n}\right]\left(I_{n}\otimes X\right)^{-1}\left[e^{T}_{1},\ldots, e^{T}_{n}\right]^{T}\geq 0,
		\end{equation} 
	Note that 
	\begin{equation}
		\left[e^{T}_{1},\ldots, e^{T}_{n}\right] (I_{n} \otimes X^{j})^{-1} \left[\begin{array}{c}
			e_{1}\\ \vdots\\e_{n}
		\end{array}\right]  = tr((X^{j})^{-1}),
	\end{equation}
		which means 
		\begin{equation}
			\gamma\geq \operatorname{tr}(X^{-1})\geq \operatorname{tr}(P).
		\end{equation}
		Therefore, it can be obtained that 
		\begin{equation}
			\operatorname{tr}(\bar{P}) - \operatorname{tr}(P) \geq \operatorname{tr}(\bar{P})-\gamma.
		\end{equation}	
	\end{proof}

\begin{remark}
Theorem \ref{thm:3} is remarkable as it establishes a concise and quantitative relationship between the error covariance traces of the original sensor network and the sensor network with redundant sensors only using a parameter  $\gamma$. Therefore, if the $\gamma$ is minimized, the lower bound of the estimation performance gap (the difference of MSEs) between the original sensor network and the redundant sensor network is maximized. This will lead to a systematical sensor design  strategy (Algorithm 1).
\end{remark}

\begin{algorithm}[h]
	\caption{Optimal Design of Redundant Sensors} 
	\hspace*{0.02in} {\bf Initialization:} 
	 $ C_{r,0}$, $\epsilon>0$,  $j=0$.\\
	\hspace*{0.02in} {\bf Repeat:} 
	\begin{algorithmic}[1]
		\State Set $C_{r} = C_{r,j}$, solve the following semi-definite programming problem
		\begin{equation}\label{eq:opt}
		\{X^{j},\tilde{C}^{j}\} = \arg\min_{X,\tilde{C}}  \gamma^{j},
		\end{equation}
	\textcolor{black}{subject to $F(X,\tilde{C})\leq 0$, $X^{j} = (X^{j})^{T}>0$ and }	
		\textcolor{black}{\begin{equation}\label{eq:LMItracej}
		\left[\begin{array}{cc}
		\gamma^{j} &\left[e^{T}_{1,n},\ldots, e^{T}_{n,n}\right]\\ \left[\begin{array}{c}
		e_{1,n}\\ \vdots\\e_{n,n}
		\end{array}\right]& I_{n} \otimes X^{j}
		\end{array}\right]\geq 0.
		\end{equation}}
		\If{$|\gamma^{j}-\gamma^{j-1}|<\epsilon$ }
		\State  Set $\tilde{C}^{*} = \tilde{C}^{j}$, $X^{*}=X^{j}$, $\gamma^{*} = \gamma^{j}$, and stop.
		\Else
		\State Set $j = j+1$, $C_{r,j} = \tilde{C}^{j}$, and go to step 1.
		\EndIf
		
	\end{algorithmic}
	\hspace*{0.02in} {\bf Output:} 
	output $\tilde{C}^{*}$.
\end{algorithm}

\textcolor{black}{
\begin{theorem}\label{thm:4} 
	Under the conditions of Assumptions 1, 2, and 3, the application of Algorithm 1 ensures the convergence of $\gamma^j$ to $\gamma^{*}$. The resulting output $\tilde{C}^{*}$ represents an optimal configuration of redundant sensors added in the network, and the associated error covariance $P$ of the network satisfies the inequality:
	\begin{equation}\label{eq:gamma*ineq}
	\operatorname{tr}(\bar{P}) - \operatorname{tr}(P) \geq \operatorname{tr}(\bar{P}) - \gamma^{*},
	\end{equation}
	where $\bar{P}$ and $P$ are the error covariances of the filter without and with redundant sensors, respectively.
\end{theorem}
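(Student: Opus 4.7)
The plan is to establish Theorem \ref{thm:4} in two stages. First, I would show that the sequence $\{\gamma^j\}$ generated by Algorithm 1 is monotonically non-increasing and bounded below, so that $\gamma^j\to\gamma^*$ by the monotone convergence theorem. Second, I would apply Theorem \ref{thm:3} at the limiting iterate $(X^*,\tilde{C}^*)$ with $C_r=\tilde{C}^*$ to obtain the inequality \eqref{eq:gamma*ineq}.

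For the monotonicity, the crucial observation is that the warm-start $C_{r,j+1}=\tilde{C}^j$ is precisely the substitution that makes the correction term $(\tilde{C}-C_r)^T\tilde{R}^{-1}(\tilde{C}-C_r)$ in the formula \eqref{eq:Omega} for $\Omega$ vanish at the previous iterate. I would verify that the triple $(X^j,\tilde{C}^j,\gamma^j)$ is feasible for the SDP solved at iteration $j+1$. The trace-LMI \eqref{eq:LMItracej} depends only on $(X,\gamma)$, so its satisfaction at step $j$ transfers verbatim to step $j+1$. For the main LMI \eqref{eq:LMImain}, only the block $\Phi$ depends on $C_r$, and the Schur-complement identity established in the proof of Theorem \ref{thm:3} tells us that \eqref{eq:LMImain} is equivalent to the constraint \eqref{eq:ineqtrans} with $\Omega=\Phi+C_r^T\tilde{R}^{-1}C_r$. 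Replacing $C_{r,j}$ by $\tilde{C}^j$ shrinks $\Omega$ because the nonnegative correction drops out; since the remaining blocks of \eqref{eq:ineqtrans} are unchanged, the new block matrix is obtained from the old one by a negative-semidefinite perturbation in its $(2,2)$-block, which preserves negative semidefiniteness. Hence $(X^j,\tilde{C}^j,\gamma^j)$ is admissible at iteration $j+1$ and the optimum satisfies $\gamma^{j+1}\leq\gamma^j$.

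Boundedness below comes directly from \eqref{eq:LMItracej}: Schur complement (Lemma \ref{lemma:Schur}) yields $\gamma^j\geq\operatorname{tr}((X^j)^{-1})>0$ since the optimization enforces $X^j>0$. Combining these facts, $\{\gamma^j\}$ converges to some $\gamma^*\geq 0$. Passing to the limit along a convergent subsequence of the bounded iterates $(X^j,\tilde{C}^j)$ and using continuity of the LMIs, the limit $(X^*,\tilde{C}^*,\gamma^*)$ is feasible for the SDP with $C_r=\tilde{C}^*$; moreover the stopping criterion $|\gamma^{j}-\gamma^{j-1}|<\epsilon$ combined with monotonicity certifies that no further reduction of $\gamma$ is attainable by one more outer iteration, which is the sense in which $\tilde{C}^*$ is an optimal (fixed-point) configuration within the algorithmic framework. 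Theorem \ref{thm:3} then applies to the pair $(X^*,\tilde{C}^*)$ with $C_r=\tilde{C}^*$ and yields $\operatorname{tr}(\bar{P})-\operatorname{tr}(P)\geq\operatorname{tr}(\bar{P})-\gamma^*$, proving \eqref{eq:gamma*ineq}.

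The principal obstacle is the feasibility-transfer step: one must confirm that perturbing only the $(2,2)$-block of the negative-semidefinite block matrix \eqref{eq:LMImain} by a negative-semidefinite term preserves its overall negative semidefiniteness. This reduces to the elementary fact that if $M\leq 0$ and $\Delta\leq 0$ is supported in a principal block, then $M+\Delta\leq 0$. A secondary technical issue is guaranteeing that the sequence of iterates $\{(X^j,\tilde{C}^j)\}$ admits a convergent subsequence; this follows because $\gamma^j\geq\operatorname{tr}((X^j)^{-1})$ together with the convergence of $\gamma^j$ prevents $X^j$ from becoming singular or unbounded, while the constraint set on $\tilde{C}$ inherited from Problem \ref{prob:2} provides the necessary compactness for $\tilde{C}^j$.
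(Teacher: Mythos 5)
Your proposal is correct and follows essentially the same route as the paper's own proof: monotone non-increase of $\gamma^{j}$ by transferring feasibility of $(X^{j},\tilde{C}^{j},\gamma^{j})$ to step $j+1$ via the warm start $C_{r,j+1}=\tilde{C}^{j}$ (so the nonnegative correction $(\tilde{C}-C_{r})^{T}\tilde{R}^{-1}(\tilde{C}-C_{r})$ drops and the $(2,2)$-block only decreases), a lower bound on $\gamma^{j}$ from the trace LMI, and then Theorem \ref{thm:3} to obtain \eqref{eq:gamma*ineq}. The only difference is cosmetic: the paper applies Theorem \ref{thm:3} directly at the terminal iterate (with the $C_{r}$ used at that step) rather than passing to a subsequential limit, which sidesteps the compactness/boundedness considerations you invoke at the end.
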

}
	The proof is given in Appendix \ref{proof:thm4}.
	\begin{proof}\label{proof:thm4}
		By Theorem \ref{thm:3}, it can be obtained that \eqref{eq:gamma*ineq} holds. Therefore, only the convergence of Algorithm 1 needs to be proved. First, it is proved that $\gamma^{j}$ has a lower bound. From \eqref{eq:LMItrace}, one has 
		\begin{equation}
			\gamma^{j} \geq tr(X^{-1}).
		\end{equation}
		By Lemma \ref{lemma:RicIneq} and \ref{lemma:ineqTrans}, it can be obtained that 
		\begin{equation}
			X^{-1}\geq P.
		\end{equation}
		Note that DARE \eqref{eq:DARE2} can be rewritten as 
		\begin{equation}
			P = A_{G}PA_{G}^{T}+Q+ A_{G}PGPA_{G}^{T},
		\end{equation}
		where $A_{G} = A(I+PG)^{-1}$. It can be obtained that for $\forall$$j$,
		\begin{equation}
			\gamma^{j}\geq tr(X^{-1})\geq tr(P) \geq tr(Q).
		\end{equation}
		Then, it will be proved that $\gamma^{j+1}\leq\gamma^{j}$. By invoking Lemma \ref{lemma:Schur}, one has \eqref{eq:LMImain} is equivalent to \eqref{eq:ineqtrans}. That is to say, in iteration step $j$ of Algorithm 1, the following holds
		\begin{equation}\label{eq:ineqtransj}
			\left[\begin{array}{cc}
				-X^{j}+X^{j}QX^{j}& X^{j}A\\A^{T}X^{j}& \Omega_{j}
			\end{array}
			\right]\leq 0,
		\end{equation}
		where 
		\begin{equation}\begin{aligned}
				\Omega_{j} = &\left(\hat{C}^{j}-C_{r,j}\right)^{T}\hat{R}^{-1}\left(\hat{C}^{j}-C_{r,j}\right)\\
				&-\left(X^{j}+G_{0}\right) - (\hat{C}^{j})^{T}\hat{R}^{-1}\hat{C}^{j},
			\end{aligned}
		\end{equation} and 
		\begin{equation}\label{eq:LMItracejj}
			\left[\begin{array}{cc}
				\gamma^{j} &\left[e^{T}_{1},\ldots, e^{T}_{n}\right]\\ \left[\begin{array}{c}
					e_{1}\\ \vdots\\e_{n}
				\end{array}\right]& I_{n} \otimes X^{j}
			\end{array}\right]\geq 0.
		\end{equation}
		Then, at iteration step $j+1$, $C_{r}$ in \eqref{eq:ineqtrans} has been set as  $C_{r} = C_{r,j+1} = \hat{C}^{j}$. Still substitute the last step solution  $\{X^{j},\hat{C}^{j}\}$ into the constraint \eqref{eq:LMImain}. Then, by invoking Lemma \ref{lemma:Schur}, one can get that the constraint \eqref{eq:LMImain} becomes 
		\begin{equation}\label{eq:ineqtransj+1}
			\left[\begin{array}{cc}
				-X^{j}+X^{j}QX^{j}& X^{j}A\\A^{T}X^{j}& \Upsilon_{j}
			\end{array}
			\right]\leq 0,
		\end{equation}
		where 
		\begin{equation}\begin{aligned}\label{eq:up<om}
				\Upsilon_{j} =-\left(X^{j}+G_{0}\right) - (\hat{C}^{j})^{T}\hat{R}^{-1}\hat{C}^{j}\leq\Omega_{j},
			\end{aligned}
		\end{equation}
		From \eqref{eq:ineqtransj} and \eqref{eq:up<om}, one has that \eqref{eq:ineqtransj+1} holds. Obviously, \eqref{eq:LMItracejj} also holds in the $j+1$ step for $X^{j}$, $\hat{C}^{j}$, which means  $\{X^{j},\hat{C}^{j}\}$ is a feasible solution for \eqref{eq:opt} in Algorithm 1 at iteration step $j+1$, and $\gamma^{j}$ is a sub-optimal feasible value. Therefore, one has $\gamma^{j+1}\leq\gamma^{j}$.
		
		Since $\gamma^{j}$ is monotone decreasing and has a lower bound, $\gamma^{j}$ will converge, and this completes the proof.
	\end{proof}

While Theorem \ref{thm:4} guarantees the convergence of Algorithm 1% and shows that it maximizes the lower bound of the performance gap between the original sensor network and the redundant sensor network
, it does not provide a direct proof of solving Problem 2. The following Theorem is provided to demonstrate that Algorithm 1 exactly solves Problem 2.

\textcolor{black}{
\begin{theorem}
	\label{thm:5}
	If Assumptions 1, 2 and 3 hold, the optimal configuration $\tilde{C}^{*}$ obtained by Algorithm 1 solves Problem \ref{prob:2}, i.e.,
	\begin{equation}
			\tilde{C}^{*} = \arg\min_{\tilde{C}} tr(P), ~ s.t. ~ \eqref{eq:p2co}.
	\end{equation}
\end{theorem}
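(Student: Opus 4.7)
The plan is to identify the limiting objective $\gamma^{*}$ with $\operatorname{tr}(P(\tilde{C}^{*}))$ and then show that no admissible configuration can do better. First, I would invoke Theorem \ref{thm:4} to obtain the convergent sequences $X^{j}\to X^{*}$, $\tilde{C}^{j}\to\tilde{C}^{*}$, and $\gamma^{j}\to \gamma^{*}$. The update rule $C_{r,j+1}=\tilde{C}^{j}$ forces $C_{r}^{*}=\tilde{C}^{*}$ in the limit, so that the penalty term $(\tilde{C}-C_{r})^{T}\tilde{R}^{-1}(\tilde{C}-C_{r})$ appearing in \eqref{eq:Omega} vanishes. Consequently, Lemma \ref{lemma:ineqTrans} tells us that the LMI \eqref{eq:LMImain}, evaluated at the fixed point, coincides exactly with the Riccati inequality for the augmented sensor network $C^{*}=[\bar{C}^{T},(\tilde{C}^{*})^{T}]^{T}$, rather than being merely a tighter sufficient condition for it.

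Second, I would read off the value of $\gamma^{*}$. By Lemma \ref{lemma:RicIneq}, every $X$ feasible at the fixed point satisfies $X^{-1}\geq P^{*}:=P(\tilde{C}^{*})$; conversely, $X=(P^{*})^{-1}$ is feasible since the DARE is a special case of the Riccati inequality. Combined with the trace LMI \eqref{eq:LMItracej}, which enforces $\gamma\geq \operatorname{tr}(X^{-1})\geq \operatorname{tr}(P^{*})$, this lower bound is attained at $X=(P^{*})^{-1}$, yielding $\gamma^{*}=\operatorname{tr}(P^{*})$.

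Third, to establish global optimality, I would take any admissible redundant-sensor configuration $\tilde{C}'$ with associated DARE solution $P'=P(\tilde{C}')$ and construct the candidate triple $\bigl((P')^{-1},\tilde{C}',\tilde{C}'\bigr)$. With $C_{r}=\tilde{C}'$ the penalty term is zero and the LMI reduces to the Riccati equation for $\tilde{C}'$, so this triple is feasible for the SDP in Algorithm 1 and achieves $\gamma=\operatorname{tr}(P')$. Initializing Algorithm 1 at $C_{r,0}=\tilde{C}'$ therefore gives $\gamma^{0}\leq \operatorname{tr}(P')$; by the monotone decrease from Theorem \ref{thm:4} and the preceding identification, $\operatorname{tr}(P(\tilde{C}^{*}))=\gamma^{*}\leq \gamma^{0}\leq \operatorname{tr}(P(\tilde{C}'))$ for every admissible $\tilde{C}'$, so $\tilde{C}^{*}$ solves Problem \ref{prob:2}.

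The main obstacle will be the second step: justifying that the SDP's optimum is genuinely attained at $X=(P^{*})^{-1}$ and not at some strictly larger $X^{-1}$. This requires checking complementary slackness between the LMI constraint and the trace LMI — specifically, that the minimum of $\operatorname{tr}(X^{-1})$ over the feasible set actually realizes the lower bound $\operatorname{tr}(P^{*})$ without violating positive definiteness of $X$. A secondary subtlety is the global-optimality comparison across initializations of the algorithm, which is handled implicitly by noting that the candidate triple built from any competing $\tilde{C}'$ is a valid feasible point for the SDP at $C_{r,0}=\tilde{C}'$, so the monotone sequence must start no worse than $\operatorname{tr}(P')$ and can only decrease from there.
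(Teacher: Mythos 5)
Your first two steps essentially track the paper's argument (at convergence the reference coincides with $\tilde{C}^{*}$, the penalty $(\tilde{C}-C_{r})^{T}\tilde{R}^{-1}(\tilde{C}-C_{r})$ vanishes, the LMI collapses to the Riccati inequality for the augmented network, and Lemma \ref{lemma:RicIneq} identifies $\gamma^{*}$ with $\operatorname{tr}(P(\tilde{C}^{*}))$), though with two inaccuracies worth noting: Theorem \ref{thm:4} only guarantees convergence of $\gamma^{j}$, not of $X^{j}$ or $\tilde{C}^{j}$, so you cannot simply ``pass to the limit'' in the update rule $C_{r,j+1}=\tilde{C}^{j}$ (the paper instead establishes $\tilde{C}^{*}=C_{r,j}$ by a contradiction argument); and for a feasible pair $(X,\tilde{C})$ with $\tilde{C}\neq\tilde{C}^{*}$, Lemma \ref{lemma:ineqTrans} only yields $X^{-1}\geq P(\tilde{C})$, not $X^{-1}\geq P^{*}$, so the identification $\gamma^{*}=\operatorname{tr}(P^{*})$ requires the fixed-point property before, not after, the bound.

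The genuine gap is in your third step, the global-optimality claim. The candidate triple $\bigl((P')^{-1},\tilde{C}',\tilde{C}'\bigr)$ is feasible for the SDP only when the reference is $C_{r}=\tilde{C}'$; in the run that actually produces $\tilde{C}^{*}$ the references are $C_{r,0},\tilde{C}^{0},\tilde{C}^{1},\dots$, which in general differ from $\tilde{C}'$, and for $C_{r}\neq\tilde{C}'$ the nonnegative penalty makes constraint \eqref{eq:LMImain} strictly tighter than the Riccati inequality for $\tilde{C}'$, so $((P')^{-1},\tilde{C}')$ need not be feasible at any iteration of that run. Your chain $\operatorname{tr}(P(\tilde{C}^{*}))=\gamma^{*}\leq\gamma^{0}\leq\operatorname{tr}(P')$ therefore mixes two different runs of the algorithm: it shows that a run \emph{initialized at} $\tilde{C}'$ ends no worse than $\operatorname{tr}(P')$, but it says nothing about the limit $\gamma^{*}$ of the run with the given initialization, which is what Theorem \ref{thm:5} asserts. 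Monotone decrease from Theorem \ref{thm:4} holds within a single run and does not transfer across initializations unless you also prove the limit is initialization-independent, which you do not. The paper closes the argument differently: having shown $\tilde{C}^{*}=C_{r,j}$ at convergence, it converts constraint \eqref{eq:LMImain} at the converged point into \eqref{eq:MDARIP}, i.e., the Riccati inequality in $(P,\tilde{C})$, so the converged SDP becomes exactly $\min\operatorname{tr}(P)$ over that feasible set, and Lemma \ref{lemma:RicIneq} forces the minimum to be attained at equality, i.e., at the DARE \eqref{eq:DARE2}, which is precisely Problem \ref{prob:2}. To repair your proof you would need an argument of this fixed-point type (or a proof that the iteration's limit value is independent of $C_{r,0}$), not a comparison of separately initialized runs.
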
}
\begin{proof}
	First, it is proved that optimizing $\gamma^{j}$ is equivalent to optimizing $tr((X^{j})^{-1})$. From Algorithm 1, it can be obtained that $\gamma^{j}$ is only constrained by \eqref{eq:LMItracej}. 
	
	By invoking Lemma \ref{lemma:Schur}, one has  
	\begin{equation}
		\gamma^{j}\geq 0,
	\end{equation}
	and
	\begin{equation}
		\gamma^{j}-tr((X^{j})^{-1})\geq 0,
	\end{equation}
	which indicates that the optimal $\gamma^{j}$ at each iteration step in Algorithm 1 equals to $tr((X^{j})^{-1})$.
	
	Then, if Algorithm 1 converges at step $j$, it is proved that 
	\begin{equation}
		\hat{C}^{*} = C_{r,j}.
	\end{equation}
	If $\hat{C}^{*} \neq C_{r,j}$, similar to the proof of Theorem \ref{proof:thm4},  set $C_{r,j} =\hat{C}^{*}$, one has that $\{C^{*},X^{*}\}$ is just a feasible solution but may not optimal to the optimization problem \eqref{eq:opt} in iteration step $j+1$, this leads to contradiction. Therefore, $\hat{C}^{*} = C_{r,j}$ is proved.
	
	Finally, it is proved that $P = (X^{*})^{-1}$ is the solution of the DARE \eqref{eq:DARE2}. By invoking Lemma \ref{lemma:Schur} and considering $\hat{C}^{*} = C_{r,j}$, it can be obtained that the constraint \eqref{eq:LMImain} of $X^{*}$ can be converted to 
	\begin{equation}\label{eq:LMItrans*}
		\left[\begin{array}{cc}
			-X^{*}+X^{*}QX^{*}& X^{*}A\\A^{T}X^{*}& -X^{*}-G
		\end{array}\right]\leq 0.
	\end{equation}
	By multiplying $(X^{*})^{-1}$ to the right and left side of all blocks in \eqref{eq:LMItrans*} and denote $(X^{*})^{-1}$ as $P$, it follows
	\begin{equation}\label{eq:MDARIP}
		\left[\begin{array}{cc}
			-P +Q&AP \\P A^{T}&-\left(P +P GP \right)
		\end{array}\right]\leq 0,
	\end{equation}
	
	and $\min tr((X^{*})^{-1})$ becomes $\min tr(P)$, and $P$ is constrained by \eqref{eq:MDARIP}, which is equivalent to 
	\begin{equation}
		-P +Q\leq 0,
	\end{equation}
	and 
	\begin{equation}
		A\left(I+PG\right)^{-1}PA^{T}-P+Q\leq 0.
	\end{equation}
	
	From Lemma \ref{lemma:RicIneq}, one can obtain that $tr(P)$ is optimized when 
	\begin{equation}
		A\left(I+PG\right)^{-1}PA^{T}-P+Q= 0,
	\end{equation}
	which is another form of the DARE \eqref{eq:DARE2}. Therefore, implementing Algorithm 1 solves Problem \ref{prob:2}.
\end{proof}

Theorem \ref{thm:4} and \ref{thm:5} demonstrate the convergence and optimality of Algorithm 1.  Until now, Problem \ref{prob:2} has been successfully solved. The proposed Algorithm 1 guarantees the convergence to an optimal solution that maximizes the lower bound of the performance gap between the original sensor network and the redundant sensor network. Furthermore, Theorem \ref{thm:5} demonstrates that the solution obtained by Algorithm 1 exactly solves Problem \ref{prob:2}, where the goal is to minimize the trace of the covariance matrix while satisfying the constraints. Overall, the paper provides a systematic sensor design strategy and offers insights into optimizing the configuration of redundant sensors to improve estimation performance.

\section{Simulation Examples}
In this section, we provide simulation examples to demonstrate the validity of our results. Specifically, we consider a scenario where four sensors collaborate to estimate the system state \cite{Distributed4SteadyKalmanFilter}:

\begin{equation}\label{eq:sim1}
A = \left[\begin{array}{cc}
0.9 & 0 \\
0 & 1.1
\end{array}\right], \bar{C} = \left[\begin{array}{cccc}
1 & 0 & 1 & 1 \\
0 & 1 & 1 & -1
\end{array}\right]^{T},Q = I_{2}/4, \bar{R} = I_{4}.
\end{equation}
To evaluate the impact of redundant sensors, we introduce two additional sensors configured as 
\begin{equation}
\tilde{C}_{1} = \left[3,0\right], \tilde{R}_{1} = I_{1}, \tilde{C}_{2} = \left[3,3\right], \tilde{R}_{2} = I_{1},
\end{equation}
respectively.

\subsection{The effects of redundant sensors}

In this case, we consider two sensor networks with different redundant sensors. The first network consists of the original sensor network and two additional $\tilde{C}_{1}$ sensors, and its parameters are as follows:
\textcolor{black}{
\begin{equation}
	C_{r1} = \left[\begin{array}{ccc}
	\bar{C}^{T}&\tilde{C}^{T}_{1}&\tilde{C}^{T}_{1}
	\end{array}\right]^{T}, ~R_{r1} = I_{6},
\end{equation}}
the second network consists of the original sensor network and $\tilde{C}_{1}$, $\tilde{C}_{2}$, and its parameters are as follow:\textcolor{black}{
\begin{equation}\label{eq:NRS12}
	C_{r2} = \left[\begin{array}{ccc}
		\bar{C}^{T}&\tilde{C}^{T}_{1}&\tilde{C}^{T}_{2}
	\end{array}\right]^{T}, ~R_{r2} = I_{6}.
\end{equation}}
\textcolor{black}{
The two networks are denoted as $C_{r1}$ and $	C_{r2}$ respectively. Denote the priori error covariance matrices of the  $\bar{C}$, $C_{r1}$, and $C_{r2}$ as $\bar{P}$, $P_{r1}$ and $P_{r2}$, respectively.  The posteriori error covariance matrices of $\bar{C}$, $C_{r1}$, and $C_{r2}$ are denoted as $\bar{P}_{p}$, $P_{p,r1}$ and $P_{p,r2}$, respectively. Also, the corresponding symplectic matrices are denoted as $\bar{S}$, $S_{r1}$ and $S_{r2}$, and the corresponding stable eigenvalues and eigenvectors are denoted as $\bar{\lambda}_{i}$, $\bar{v}_{i}$,  $\lambda_{r1,i}$, $v_{r1,i}$, $\lambda_{r2,i}$,  $v_{r2,i}$, $i = 1, 2$.}

\textcolor{black}{By computation, it is evident that $\bar{\lambda}_{2} = \lambda_{r1,2}$ and $\bar{v}_{2}=v_{r1,2}$. This implies that $\bar{S}$ and $S_{r1}$ share a common stable eigenvalue and its associated eigenvector, while $\bar{S}$ and $S_{r2}$ do not have any common stable eigenvalues and associated eigenvectors.}  By invoking Theorem \ref{thm:2} and Lemma \ref{lemma:congruence}, one has that 
\textcolor{black}{
\begin{equation*}\label{eq:P>=}
\bar{P}\geqq P_{r1},   ~\bar{P}_{p}\geqq P_{p,r1},\bar{P}> P_{r2},   ~\bar{P}_{p}> P_{p,r2}.
\end{equation*}}
 
%\begin{equation}\label{eq:P>}
%
%\end{equation}
%and denote the posteriori estimated state of $\bar{C}$, $C_{r1}$ and $C_{r2}$ as  $\hat{x}_{r1}(k) = \left[\hat{x}_{r1,1}(k),\hat{x}_{r1,2}(k)\right]^{T}$, $\hat{x}_{r1}(k) = \left[\hat{x}_{r1,1}(k),\hat{x}_{r1,2}(k)\right]^{T}$ and $\hat{x}_{r1}(k) = \left[\hat{x}_{r1,1}(k),\hat{x}_{r1,2}(k)\right]^{T}$ respectively, 
\textcolor{black}{Perform Kalman filtering 20000 steps using $\bar{C}$, $C_{r1}$, and $C_{r2}$,  and the corresponding estimated errors of $\bar{C}$, $C_{r1}$ and $C_{r2}$ as $\bar{e}$, $e_{r1}$ and $e_{r2}$, respectively. We plot \figurename ~\ref{fig:e1} and \figurename~ \ref{fig:e2} to illustrate the performance improvements of $C_{r1}$ and $C_{r2}$ with respect to $\bar{C}$. }

\textcolor{black}{Fig. \ref{fig:e1} illustrates the probability density distribution histograms of $\bar{e}_{1}$, $e_{r1,1}$ (\figurename~ \ref{Fig.sub:e11}) and $\bar{e}_{2}$, $e_{r1,2}$ (\figurename~ \ref{Fig.sub:e12}), one can obtain that by adding two $\tilde{C}_{1}$ sensors, the sensor network $C_{r1}$ reduces the variance of the estimation error of the first element, but fail to reduce the variance of the estimation error of the second element. Fig. \ref{fig:e2} shows the probability density distribution histograms of  $\bar{e}_{1}$, $e_{r2,1}$ (\figurename ~\ref{Fig.sub:e21}) and ,  $\bar{e}_{2}$, $e_{r2,2}$ (\figurename ~\ref{Fig.sub:e22}), it can be obtained that sensor network $C_{r2}$ reduces the variances of both elements' estimate error.} The results validate the correctness of Theorem \ref{thm:2}.
%\begin{equation*}
%\begin{aligned}
%\bar{P} = &\left[\begin{array}{cc}
%0.3967    & 0\\
%0     & 0.4400
%\end{array}\right],P_{NRS11} = \left[\begin{array}{cc}
%0.2830  &  0\\
%0   & 0.4400
%\end{array}\right],\\&
%P_{NRS12} = \left[\begin{array}{cc}
%0.2934   &  -0.0300\\
%-0.0300 & 0.3375
%\end{array}\right],
%\end{aligned}
%\end{equation*}
%\begin{equation*}
%\begin{aligned}
%\bar{P}_{p} = &\left[\begin{array}{cc}
%0.1811      & 0\\
%0     & 0.1897
%\end{array}\right],P_{p,NRS11} = \left[\begin{array}{cc}
%0.0408  &  0\\
%0   & 0.1897
%\end{array}\right],\\&
%P_{p,NRS12} = \left[\begin{array}{cc}
%0.0536   &  -0.0333\\
%-0.0333 &  0.0874
%\end{array}\right],
%\end{aligned}
%\end{equation*}
%which are consistent with the above results \eqref{eq:P>=}, \eqref{eq:P>}.

\begin{figure}[htbp]
	\centering
	\subfigure[Probability distributions of the $\bar{e}_{1}$ and  $e_{r1,1}$.]{
		\label{Fig.sub:e11}
		\includegraphics[width=0.23\textwidth]{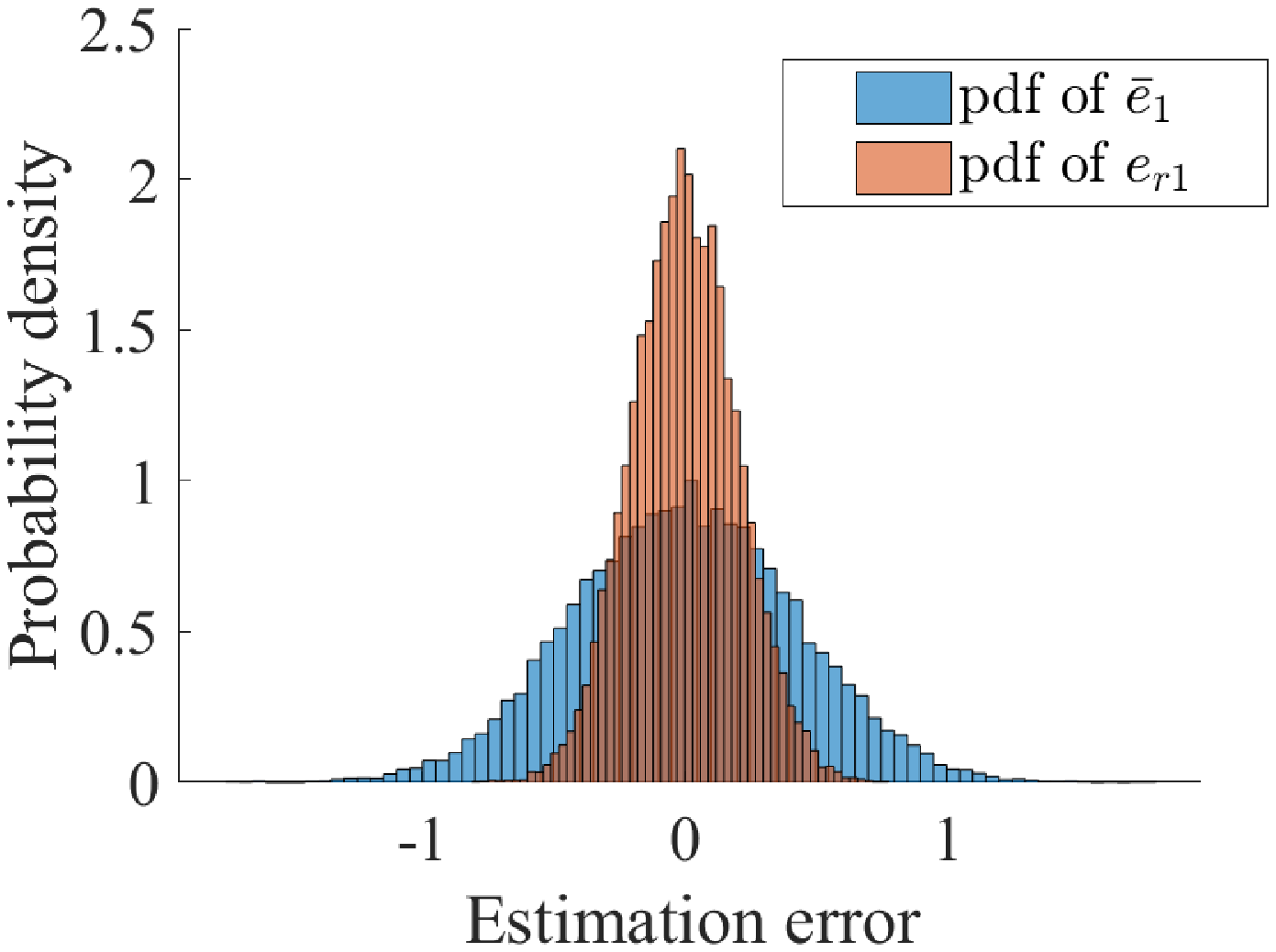}}
	\subfigure[Probability distributions of the $\bar{e}_{2}$ and  $e_{r1,2}$.]{
		\label{Fig.sub:e12}
		\includegraphics[width=0.23\textwidth]{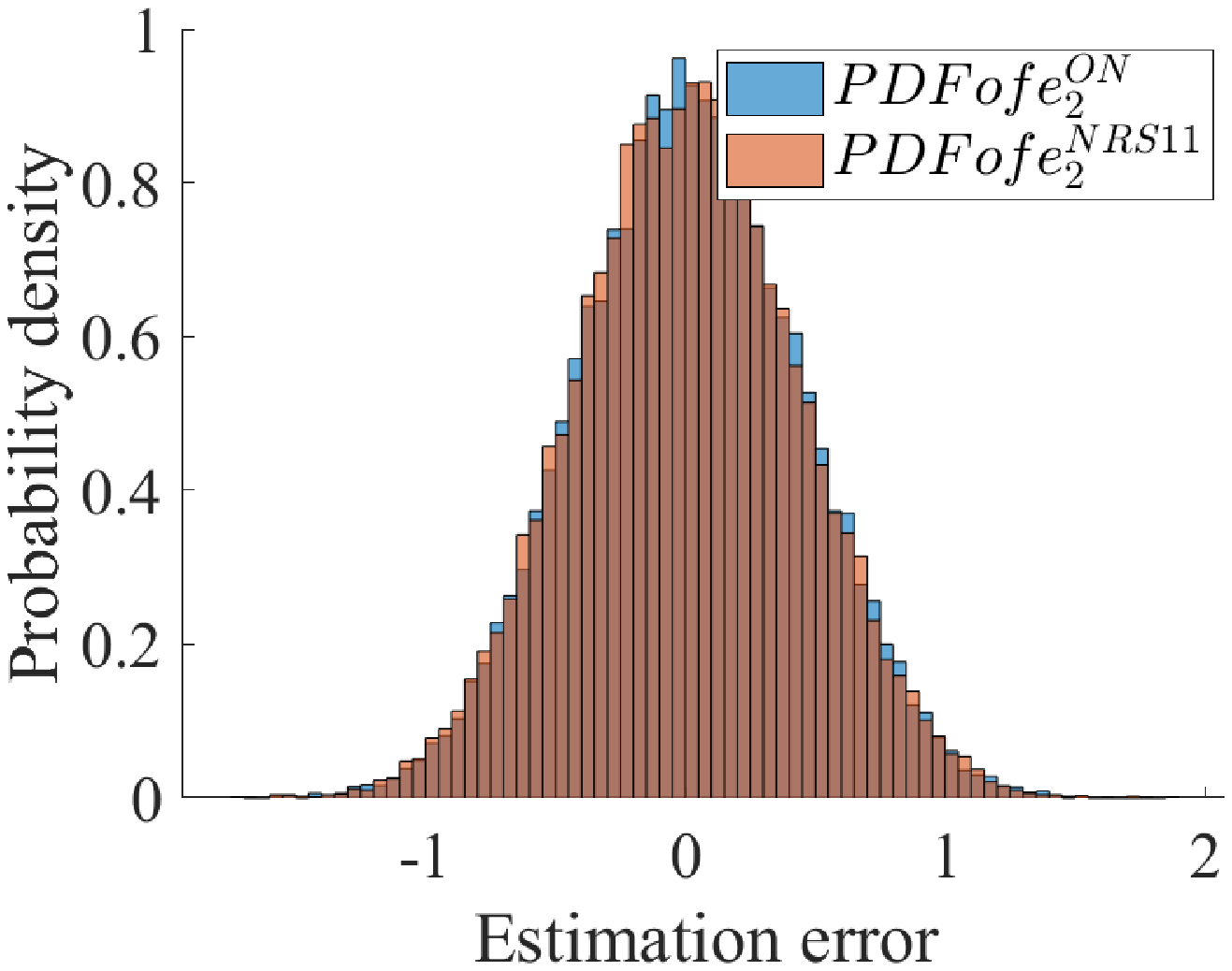}}
	\caption{Probability distributions of the estimate errors of $\bar{C}$ and $C_{r1}$.}
	\label{fig:e1}
\end{figure}
\begin{figure}[htbp]
	\centering
	\subfigure[Probability distributions of the $\bar{e}_{1}$ and  $e_{r2,1}$.]{
		\label{Fig.sub:e21}
		\includegraphics[width=0.23\textwidth]{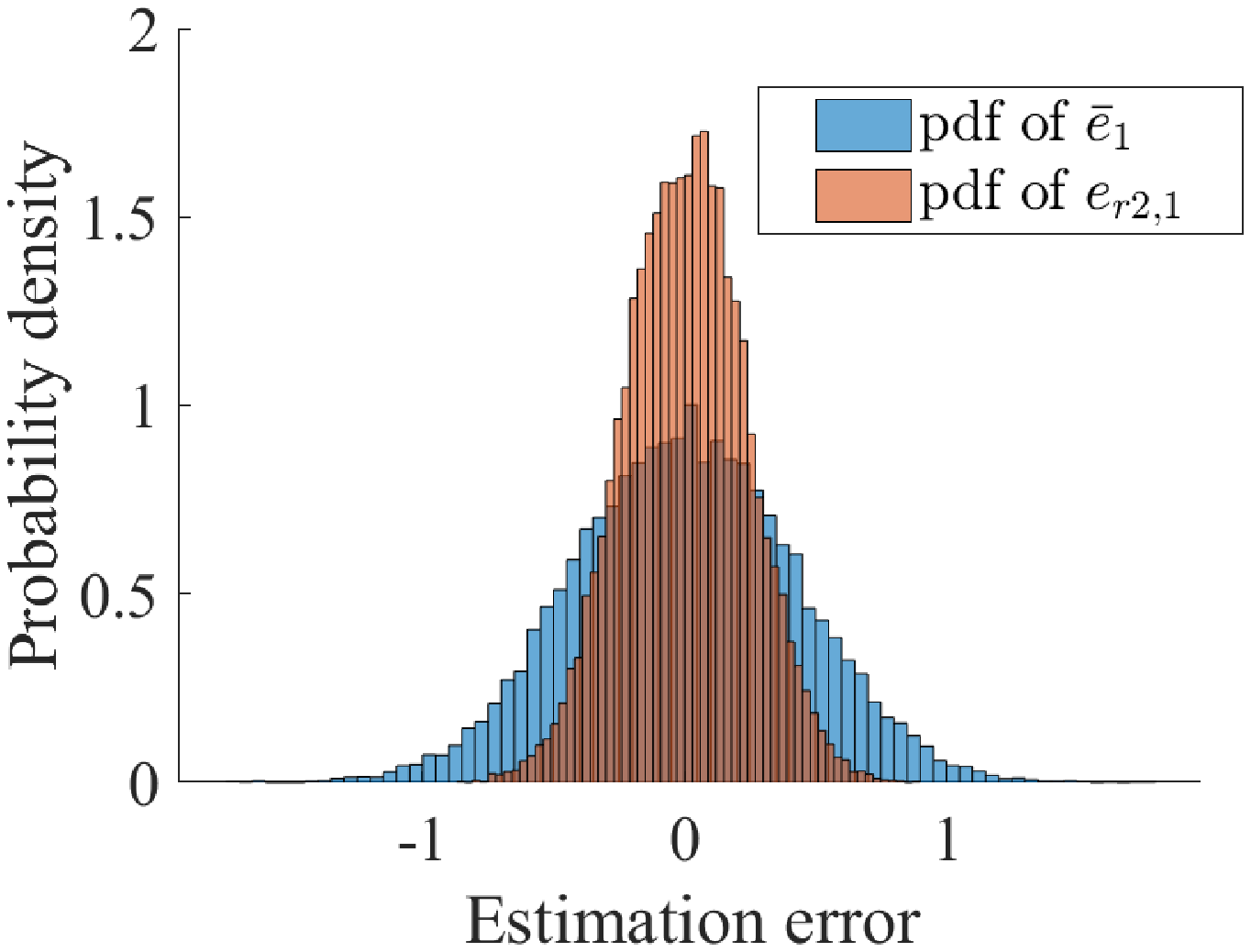}}
	\subfigure[Probability distributions of the $\bar{e}_{2}$ and  $e_{r2,2}$.]{
		\label{Fig.sub:e22}
		\includegraphics[width=0.23\textwidth]{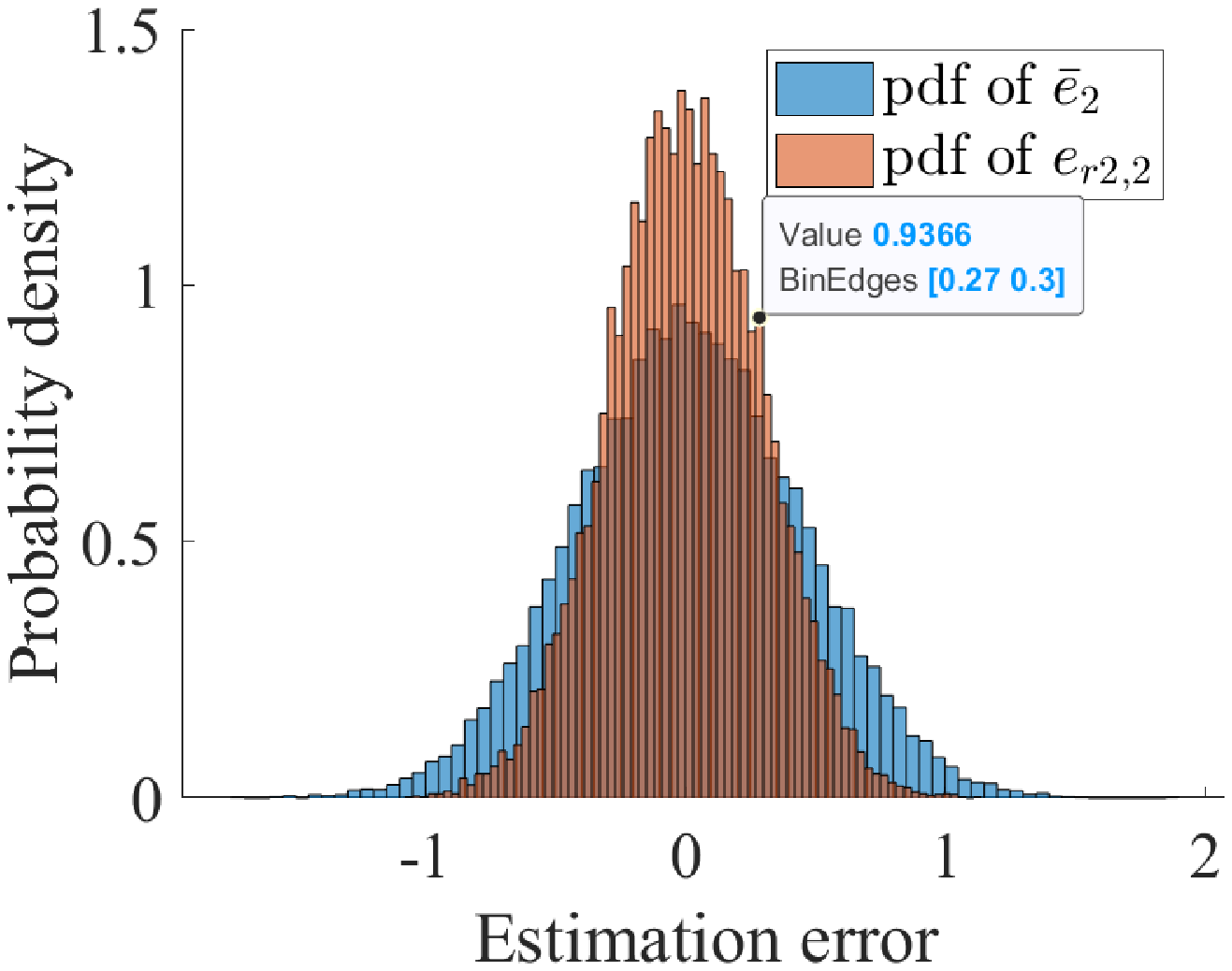}}
	\caption{Probability distributions of the estimate errors of $\bar{C}$ and $C_{r2}$.}
	\label{fig:e2}
\end{figure}
%Similarly, denote the posteriori estimated state of NRS11 as $\hat{x}^{NRS11}(k) = \left[\hat{x}^{NRS11}_{1}(k),\hat{x}^{NRS11}_{2}(k)\right]^{T}$ and the corresponding estimated errors of ON, NRS11, NRS12 as
%\begin{equation}
%\begin{aligned}
%e^{ON}(k)& = \left[e^{ON}_{1}(k),e^{ON}_{2}(k)\right]^{T}\\&=\left[\hat{x}^{ON}_{1}(k) - x_{1}(k),\hat{x}^{ON}_{2}(k)-x_{2}(k)\right]^{T},\\
%e^{NRS11}(k)& = \left[e^{NRS11}_{1}(k),e^{NRS11}_{2}(k)\right]^{T}\\&=\left[\hat{x}^{NRS11}_{1}(k) - x_{1}(k),\hat{x}^{NRS11}_{2}(k)-x_{2}(k)\right]^{T},\\
%e^{NRS12}(k)& = \left[e^{NRS12}_{1}(k),e^{NRS12}_{2}(k)\right]^{T}\\&=\left[\hat{x}^{NRS12}_{1}(k) - x_{1}(k),\hat{x}^{NRS12}_{2}(k)-x_{2}(k)\right]^{T}.
%\end{aligned}
%\end{equation} 

\subsection{The effectiveness of Algorithm 1}

In this subsection, we provide a simulation example to demonstrate the effectiveness of Algorithm 1. We consider the original sensor network $\bar{C}$ defined by \eqref{eq:sim1} and the sensor network with redundant sensors $C_{r2}$ defined by \eqref{eq:NRS12}. Our goal is to use Algorithm 1 to determine the optimal design of redundant sensors, specifically, to find two redundant sensors \textcolor{black}{ $\tilde{C}^{*} = \left[\begin{array}{cc}
	\tilde{C}^{*T}_{1}&\tilde{C}^{*T}_{2}
\end{array}\right]^{T}$} to add to the original sensor network, and then compare the estimation performance of the resulting network \textcolor{black}{ $C^{*}_{r}=\left[\begin{array}{cc}\bar{C}^{T}&\tilde{C}^{*T}
\end{array}\right]^{T}$} with that of $C_{r2}$ to demonstrate the effectiveness of Algorithm 1. Additionally, we illustrate the convergence of Algorithm 1.

Set $C_{r,0} = \left[\begin{array}{cc}
	3 & 0 \\ 3 & 3
\end{array}\right]$, $\epsilon = 0.00001$, $\tilde{R} = I_{2}$, and  $\|\tilde{C}^{j}_{1}\|_{2}\leq5$ , $\|\tilde{C}^{j}_{2}\|_{2}\leq5$. Then, implementing Algorithm 1, the following results can be obtained

\textcolor{black}{
\begin{equation}
\tilde{C}^{*} = \left[\begin{array}{cc}
4.4428&-2.2938\\ 2.0562& 4.5575
\end{array} \right], \gamma^{*}= 0.5572.
\end{equation}}

\begin{figure}[h]
	\centering
	\includegraphics[width=0.7\linewidth]{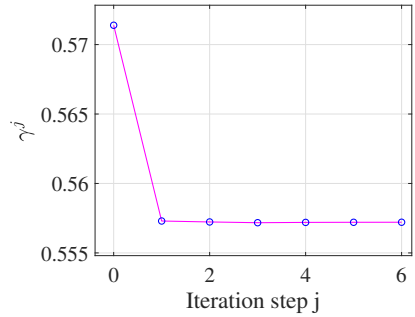}
	\caption{The evolution of $\gamma^{j}$ with iteration step $j$.}
	\label{fig:gammaj}
\end{figure}
\figurename~\ref{fig:gammaj} shows how $\gamma^{j}$ varies with each iteration, which indicates that Algorithm 1 converges very fast. 

\textcolor{black}{
 Implement Kalman filter 20000 steps using $C_{r}^{*}$. \figurename~ \ref{fig:eo2} shows the probability density distribution histograms of $e_{r2}$ and $e_{r*}$. It is observed that the variance of each element in $e_{r*}$ is smaller than that of $e_{r2}$, indicating that the sensor network with redundant sensors' configuration provided by Algorithm 1 achieves better performance.}
 
\textcolor{black}{
To illustrate the scalability of Algorithm 1. Algorithm 1 is executed under varying numbers of redundant sensors using MATLAB(R2019a) with the YALMIP toolbox on an Intel(R) Core(TM) i5-9400 CPU @2.90GHz and 8.00 GB RAM. The average running time over 10 executions under 10 to 1000 redundant sensors are listed in \tablename ~1.}
\begin{table}[htpb]
	\label{tab:Running Time}
	\centering
	\caption{\textcolor{black}{Average Running Times Over ten Executions} }
\textcolor{black}{	\begin{tabular}{ccccc}
		\hline
		Number of $\tilde{C}$ rows	& 10  & 100 & 500 & 1000  \\
		\hline
		 Running Time (s)	&  0.1240  & 1.1900 & 41.3100 & 392.8320\\
		\hline
	\end{tabular}}
\end{table}
\begin{figure}[htpb]
	\centering
	\subfigure[Probability distributions of the  ${e}_{r2,1}$ and  $e_{r*,1}$.]{
		\label{Fig.sub:eo21}
		\includegraphics[width=0.23\textwidth]{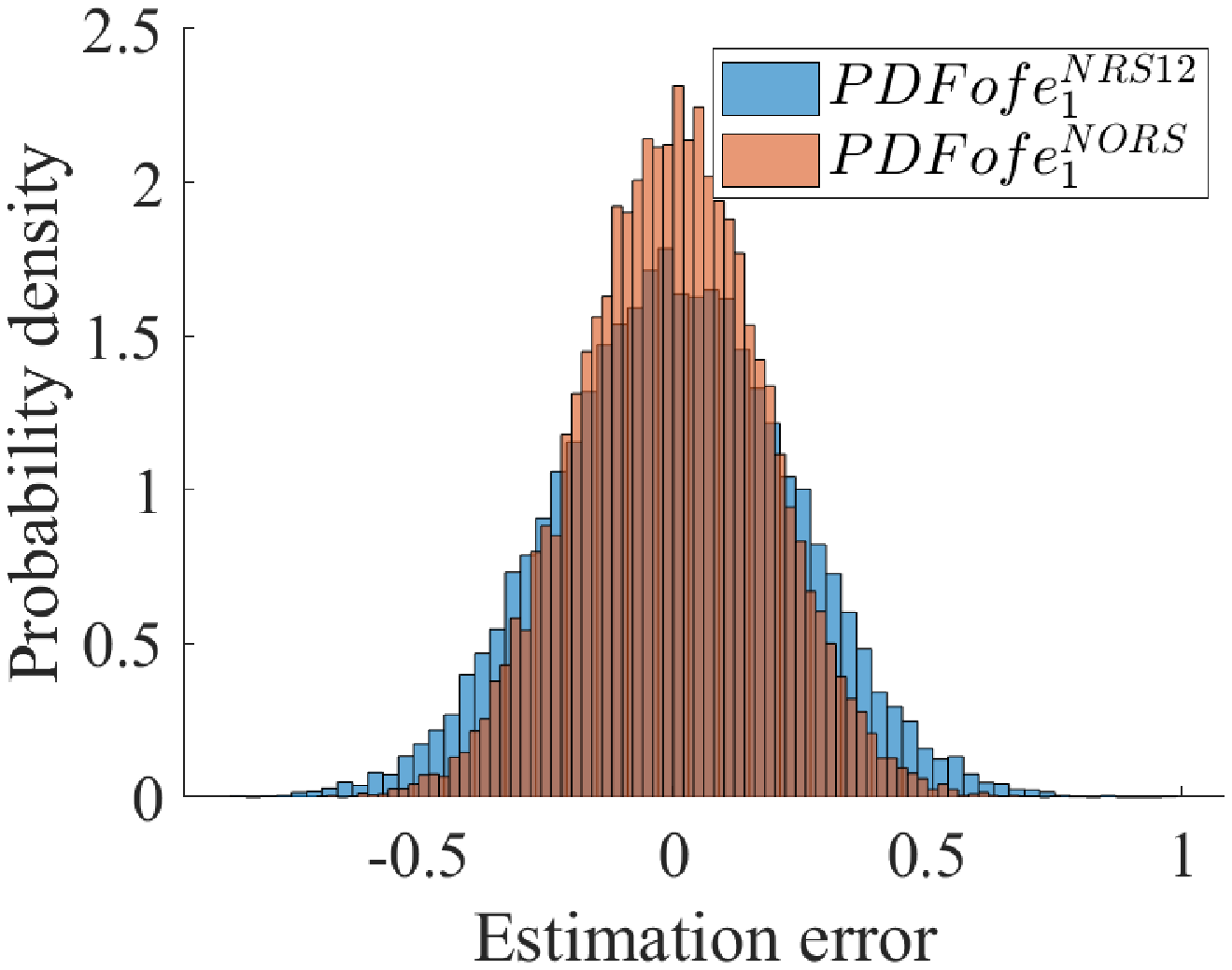}}
	\subfigure[Probability distributions of the  ${e}_{r2,2}$ and  $e_{r*,2}$.]{
		\label{Fig.sub:eo22}
		\includegraphics[width=0.23\textwidth]{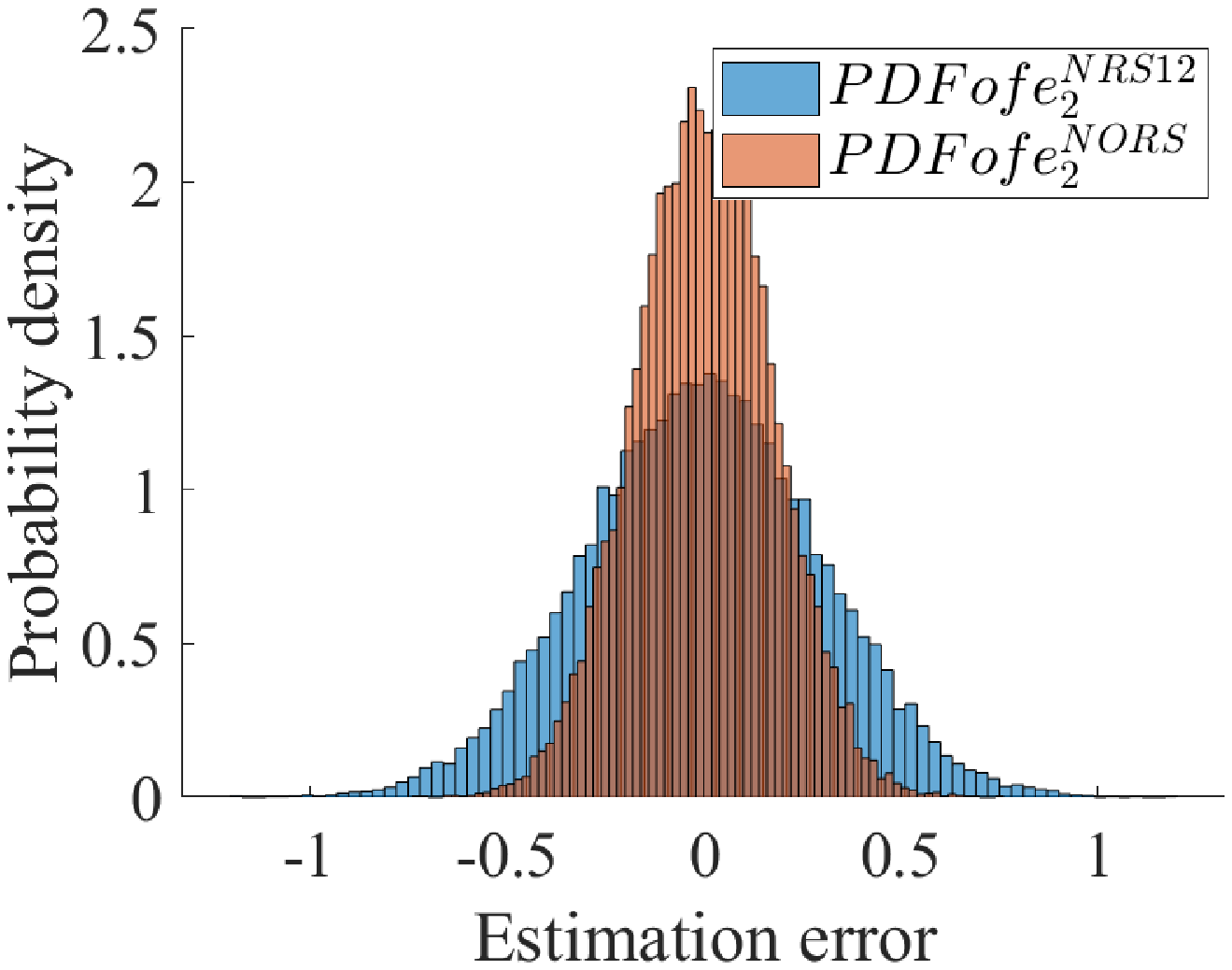}}
	\caption{Probability distributions of the estimate errors of $C_{r2}$ and $C^{*}_{r}$.}
	\label{fig:eo2}
\end{figure}

%\figurename \ref{fig:estimate_xo} displays the real state (black curves), estimated state of NRS12 (green curves), and estimated state of NORS (magenta curves) for the first 250 steps. It is evident that the estimated state of NORS deviates less from the real state than the estimated state of NRS12, indicating that the sensor network with redundant sensors' configuration provided by Algorithm 1 achieves better performance.

\section{Conclusion}
This paper delves into a thorough investigation of the effects and optimal design of redundant sensors in \textcolor{black}{collaborative state estimation. }The findings highlight the positive impact of incorporating redundant sensors in enhancing estimation performance. However, it is crucial to note that the benefits of redundant sensors may not extend uniformly to all elements in the state. To address this, the paper introduces a valuable condition that determines the scenarios in which adding redundant sensors can lead to simultaneous improvements in estimating all elements of the system state.  Furthermore, the paper presents an algorithmic approach to optimize the design of redundant sensors, ensuring convergence and optimality. Numerical simulations are conducted to demonstrate the accuracy of the results and the effectiveness of the algorithm. Future research may aim to extend these findings to nonlinear filters as well as investigate other optimization algorithms or strategies that could further enhance the performance of redundant sensors in sensor networks. Such endeavors would contribute to the advancement of sensor network technology and enable improved state estimation and performance across diverse application domains.
%
%\section*{Appendices}
%\appendix
%\section{Proof of Theorem \ref{thm:1}}\label{proof:thm1}
%
%\section{Proof of Theorem \ref{thm:2}}\label{proof:thm2}

%\section{Proof of Theorem \ref{thm:3}}\label{proof:thm3}

%\section{Proof of Theorem \ref{thm:4}}\label{proof:thm4}

\bibliographystyle{ieeetr}
\bibliography{ref}

\end{document}